\documentclass[11pt]{article}

\pdfoutput=1


\usepackage{amsmath,amsthm,amssymb,mathtools} %
\usepackage{authblk} %
\usepackage[english]{babel} %
\usepackage{cite} 
\usepackage{color} %
\usepackage{csquotes} %
\usepackage{dsfont} %
\usepackage[breaklinks]{hyperref} %
\usepackage[hmargin=1in,vmargin=1in]{geometry} %
\usepackage{graphicx} %
\usepackage{hyphenat} %
\usepackage{mathpazo} 
\usepackage{mdframed} %
\usepackage{suffix} 
\usepackage{tikz} %
\usepackage{xspace} %

\frenchspacing

\hypersetup{colorlinks=true, linkcolor=blue, citecolor=magenta}

\usetikzlibrary{calc,positioning}

\mdfdefinestyle{figstyle}{ %
  linecolor=black!7, %
  backgroundcolor=black!7, %
  innertopmargin=10pt, %
  innerleftmargin=25pt, %
  innerrightmargin=25pt, %
  innerbottommargin=10pt %
}

\definecolor{White}{rgb}{1,1,1} %
\definecolor{Black}{rgb}{0,0,0} %
\definecolor{LightGray}{rgb}{.8,.8,.8} %
\colorlet{ChannelColor}{LightGray} %
\colorlet{ChannelTextColor}{Black} %
\colorlet{ReadoutColor}{White} %


\newtheorem{theorem}{Theorem} %
\newtheorem{lemma}[theorem]{Lemma} %
\theoremstyle{definition} %
\newtheorem{definition}[theorem]{Definition} %
\theoremstyle{remark} %
\newtheorem{remark}{Remark} %

\newcommand{\ket}[1]{\ensuremath{\lvert #1 \rangle}} %
\newcommand{\bra}[1]{\ensuremath{\langle #1 \rvert}} %
\newcommand{\ip}[2]{\ensuremath{\left\langle#1,#2\right\rangle}} %
\newcommand{\norm}[1]{\ensuremath{\left\lVert #1 \right\rVert}} %
\newcommand{\abs}[1]{\ensuremath{\left\lvert #1 \right\rvert}} %

\newcommand{\complex}{\mathbb{C}} %
\renewcommand{\natural}{\mathbb{N}} %

\newcommand{\class}[1]{\textup{#1}} %
\newcommand{\reg}[1]{\textsf{#1}} %

\newcommand{\alg}[1]{{\sf #1}\xspace} %

\newcommand{\setft}[1]{\mathrm{#1}} %
\newcommand{\Lin}{\setft{L}} %

\def\X{\mathcal{X}} %
\def\Y{\mathcal{Y}} %
\def\Z{\mathcal{Z}} %
\def\W{\mathcal{W}} %
\def\D{\mathcal{D}} %

\def\I{\mathbb{1}} %

\def\yes{\text{yes}} %
\def\no{\text{no}} %

\DeclareMathOperator{\tr}{Tr} %

\newenvironment{mylist}[1]{\begin{list}{}{ %
      \setlength{\leftmargin}{#1} %
      \setlength{\rightmargin}{0mm} %
      \setlength{\labelsep}{2mm} %
      \setlength{\labelwidth}{8mm} %
      \setlength{\itemsep}{0mm}}}{\end{list}}


\begin{document}


\title{\LARGE\bf Zero-knowledge proof systems for \class{QMA}}

\author[1]{Anne Broadbent}
\author[2,3]{Zhengfeng Ji}
\author[4]{Fang Song}
\author[5,6]{John Watrous}

\affil[1]{Department of Mathematics and Statistics\protect\\
  University of Ottawa, Canada\vspace{2mm}}

\affil[2]{Centre for Quantum Computation and Intelligent Systems, School of
  Software\protect\\
  Faculty of Engineering and Information Technology\protect\\
  University of Technology Sydney, Australia\vspace{2mm}}

\affil[3]{State Key Laboratory of Computer Science,
  Institute of Software\protect\\ Chinese Academy of Sciences,
  China\vspace{2mm}}

\affil[4]{Institute for Quantum Computing and Department of
  Combinatorics \& Optimization\protect\\
  University of Waterloo, Canada\vspace{2mm}}

\affil[5]{Institute for Quantum Computing and School of Computer
  Science\protect\\
  University of Waterloo, Canada\vspace{2mm}}

\affil[6]{Canadian Institute for Advanced Research\protect\\
  Toronto, Canada}

\date{\today}

\renewcommand\Affilfont{\normalsize\itshape}
\renewcommand\Authfont{\large}
\setlength{\affilsep}{6mm}
\renewcommand\Authsep{\rule{10mm}{0mm}}
\renewcommand\Authands{\rule{10mm}{0mm}}

\maketitle
\setcounter{page}{0}

\thispagestyle{empty}

\begin{abstract}
  Prior work has established that all problems in \class{NP} admit classical
  zero-knowledge proof systems, and under reasonable hardness assumptions for
  quantum computations, these proof systems can be made secure against quantum
  attacks.
  We prove a result representing a further quantum generalization of this fact,
  which is that every problem in the complexity class \class{QMA} has a quantum
  zero-knowledge proof system.
  More specifically, assuming the existence of an unconditionally binding and
  quantum computationally concealing commitment scheme, we prove that every
  problem in the complexity class \class{QMA} has a quantum interactive proof
  system that is zero-knowledge with respect to efficient quantum computations.

  Our \class{QMA} proof system is sound against arbitrary quantum provers, but
  only requires an honest prover to perform polynomial-time quantum
  computations, provided that it holds a quantum witness for a given instance of
  the \class{QMA} problem under consideration.
  The proof system relies on a new variant of the \class{QMA}-complete local
  Hamiltonian problem in which the local terms are described by Clifford
  operations and standard basis measurements.
  We believe that the QMA-completeness of this problem may have other uses
  in quantum complexity.
\end{abstract}

\newpage

\section{Introduction}
\label{sec:intro}

Zero-knowledge proof systems, first introduced by Goldwasser, Micali and
Rackoff~\cite{GMR89}, are interactive protocols that allow a prover to convince
a verifier of the validity of a statement while revealing no additional
information beyond the statement's validity.
Although paradoxical as it appears, several problems that are not known to
be efficiently computable, such as the Quadratic Non-Residuosity, Graph
Isomorphism, and Graph Non-Isomorphism problems, were shown to admit
zero-knowledge proof systems~\cite{GMR89,GMW91}.
Under reasonable intractability assumptions, Goldreich, Micali and
Wigderson~\cite{GMW91} gave a zero-knowledge protocol for the Graph
$3$-Coloring problem and, because of its \class{NP}-completeness, for all
\class{NP} problems.
This line of work was further extended in~\cite{BOGG+90}, which showed that all
problems in \class{IP} have zero-knowledge proof systems.

Since the invention of this concept, zero-knowledge proof systems have become a
cornerstone of modern theoretical cryptography.
In addition to the conceptual innovation of formulating a complexity-theoretic
notion of knowledge, zero-knowledge proof systems are essential building blocks
in a host of cryptographic constructions.
One notable example is the design of secure two-party and multi-party
computation protocols~\cite{GMW87}.

The extensive works on zero-knowledge largely reside in a classical world.
The development of quantum information science and technology has urged
another look at the landscape of zero-knowledge proof systems in a
\emph{quantum} world.
Namely, both honest users and adversaries may potentially possess the
capability to exchange and process quantum information.
There are, of course, zero-knowledge protocols that immediately become insecure
in the presence of quantum attacks due to efficient quantum algorithms that
break the intractability assumptions upon which these protocols rely.
For instance, Shor's quantum algorithms for factoring and computing discrete
logarithms~\cite{Sho97} invalidate the use of these problems, generally
conjectured to be classically hard, as a basis for the security of
zero-knowledge protocols against quantum attacks.
Even with computational assumptions against quantum adversaries, however, it
is still highly nontrivial to establish the security of classical
zero-knowledge proof systems in the presence of malicious \emph{quantum}
verifiers because of a technical reason that we now briefly explain.

The zero-knowledge property of a proof system for a fixed input string is
concerned with the computations that may be realized through an interaction
between a (possibly malicious) verifier and the prover.
That is, the malicious verifier may take an arbitrary input (usually
called the \emph{auxiliary input} to distinguish it from the input string to
the proof system under consideration), interact with the prover in any way it
sees fit, and produce an output that is representative of what it has learned
through the interaction.
Roughly speaking, the prover is said to be \emph{zero-knowledge} on the
fixed input string if any computation of the sort just described can be
efficiently approximated\footnote{
  Different notions of approximations are considered, including
  \emph{statistical} approximations and \emph{computational} approximations,
  which require that the simulator's computation is either statistically
  (or information-theoretically) indistinguishable or computationally
  indistinguishable from the malicious verifier's computation.
  This paper is primarily concerned with the computational variant.
}
by a \emph{simulator} operating entirely on its own---meaning that it does not
interact with the prover, and in the case of an \class{NP} problem it does not
possess a witness for the fixed problem instance being considered.
The proof system is then said to be zero-knowledge when this zero-knowledge
property holds for all yes-instances of the problem under consideration.

Classically speaking, the zero-knowledge property is typically established
through a technique known as \emph{rewinding}.
In essence, the simulator can store a copy of its auxiliary input, and it can
make guesses and store intermediate states representing a hypothetical
prover/verifier interaction---and if it makes a bad guess or otherwise
experiences bad luck when simulating this hypothetical interaction,
it simply reverts to an earlier stage (or possibly back to the beginning) of
the simulation and tries again.
Indeed, it is generally the simulator's freedom to disregard the temporal
restrictions of the actual prover/verifier interaction in a way such as this
that makes it possible to succeed.

However, rewinding a quantum simulation is more problematic; the
\emph{no-cloning theorem}~\cite{WZ82} forbids one from copying quantum
information, making it impossible to store a copy of the input or of an
intermediate state, and measurements generally have an irreversible
effect~\cite{FP96} that may partially destroy quantum information.
Such difficulties were first observed by van de Graaf~\cite{Gra97} and further
studied in~\cite{Wat02,DFS04}.
Later, a \emph{quantum rewinding} technique was found~\cite{Wat09} to establish
that several interactive proof systems, including the
Goldreich-Micali-Wigderson Graph $3$-Coloring proof system~\cite{GMW91},
remain zero-knowledge against malicious quantum verifiers
(under appropriate quantum intractability assumptions in some cases).
It follows that all \class{NP} problems have zero-knowledge proof systems even
against quantum malicious verifiers, provided that a quantum analogue of the
intractability assumption required by the
Goldreich-Micali-Wigderson Graph $3$-Coloring proof system are in place.

This work studies the quantum analogue of \class{NP}, known as \class{QMA},
in the context of zero-knowledge.
These are problems with a succinct \emph{quantum} witness satisfying similar
completeness and soundness to \class{NP} (or its randomized variant
\class{MA}).
Quantum witnesses and verification are conjectured to be more powerful than
their classical counterparts: there are problems that admit short quantum
witnesses, whereas there is no known method for verification using a
polynomial-sized classical witness.
In other words, $\class{NP}\subseteq\class{QMA}$ holds trivially, and the
containment is typically conjectured to be proper.
The question we address in this paper is:
\emph{Does every problem in \class{QMA} have a zero-knowledge quantum
  interactive proof system?}
In more philosophical terms, viewing quantum witnesses as precious sources of
knowledge:
\emph{Can one always devise a proof system that reveals nothing about
a quantum witness beyond its validity?}

\subsection{Our contributions}

We answer the above question positively by constructing a quantum interactive
proof system for any problem in \class{QMA} that is zero-knowledge against
any polynomial-time quantum adversary, under a reasonable quantum
intractability assumption.

\begin{theorem}
  \label{thm:main}
  Assuming the existence of an unconditionally binding and quantum
  computationally concealing bit commitment scheme, every problem in
  \class{QMA} has a quantum computational zero-knowledge proof system.
\end{theorem}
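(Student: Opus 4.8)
The plan is to follow the same high-level strategy that succeeded for \class{NP}: reduce the problem to a convenient complete problem, build a commitment-based proof system around an encoded version of a witness, and then establish soundness and zero-knowledge. First I would identify a \class{QMA}-complete problem whose verification procedure has a particularly friendly structure---as the abstract hints, a variant of the local Hamiltonian problem in which each local term can be implemented by a Clifford circuit followed by standard-basis measurements. Establishing the \class{QMA}-completeness of this Clifford-form local Hamiltonian problem is a self-contained combinatorial/circuit-manipulation step: starting from an arbitrary \class{QMA} verification circuit, apply standard gadgets (for example, replacing non-Clifford gates such as $T$ gates by measurement-based gadgets consuming magic states, and folding the resulting measurement statistics into Hamiltonian terms via the usual circuit-to-Hamiltonian construction of Kitaev) so that the final Hamiltonian is a sum of polynomially many terms, each of the promised form, with the usual yes/no energy gap (inverse-polynomial, which can be amplified).

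Next I would design the proof system itself. The honest prover, holding a quantum witness $\rho$ for the instance, prepares an \emph{encoding} of $\rho$ under a randomly chosen element of a suitable stabilizer/Clifford code (a quantum authentication-style encoding whose key the prover commits to using the given unconditionally-binding, computationally-concealing commitment scheme), sends the encoded qubits to the verifier, and commits to all the randomness. The verifier then asks the prover to reveal enough classical information to check one randomly selected local Hamiltonian term: because that term is Clifford plus standard-basis measurement, and because the encoding is a stabilizer code, the measurement the verifier must perform on the encoded state and the consistency check it must run on the revealed commitments are both efficiently implementable, and a low-energy witness passes with high probability. The key structural point---and the reason the Clifford-form complete problem is chosen---is that checking a term never requires the verifier to apply a non-Clifford operation to unknown encoded data, so the protocol stays inside a regime where the security analysis is tractable.

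For soundness against arbitrary quantum provers, I would argue that the unconditional binding property of the commitment scheme pins down, at the end of the commit phase, a fixed encoding key and hence a fixed decoded state on the verifier's register; if the instance is a no-instance, this decoded state has energy bounded away from the yes-threshold against the complete problem's Hamiltonian, so a uniformly random term rejects with noticeable probability, and parallel (or sequential) repetition drives the soundness error down. For zero-knowledge, I would exhibit a simulator that, for any malicious quantum verifier with auxiliary input, produces a computationally indistinguishable view: the simulator commits to garbage (indistinguishability of commitments handles the commit phase by the computational concealing property), sends a maximally-mixed-looking encoded register, guesses which term the verifier will challenge, and---crucially---uses \emph{quantum rewinding}~\cite{Wat09} to retry until its guess is correct, at which point it can open the commitments consistently because the authentication/stabilizer structure lets it simulate a passing transcript for that one term without a witness; a hybrid argument then interpolates between the real and simulated views. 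The main obstacle I anticipate is precisely this zero-knowledge step: one must ensure the simulator's success probability in each rewinding attempt is (nearly) independent of the verifier's auxiliary state so that Watrous's quantum rewinding lemma applies, which forces careful design of the commitment and encoding layers (and likely an extra coin-flipping or commitment-extraction sub-protocol) so that the verifier cannot bias which term gets challenged based on information correlated with the encoded data.
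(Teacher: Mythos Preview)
The central gap is in your zero-knowledge simulation. You propose that the simulator commits to garbage, sends an encoded register, guesses the challenge, rewinds until the guess matches, and then ``opens the commitments consistently.'' But the commitment scheme is \emph{unconditionally binding}: having committed to garbage, the simulator cannot produce a valid opening to any key consistent with a passing transcript, no matter how many times it rewinds. The obstacle is not guessing the challenge; it is producing an opening at all. The paper resolves this by \emph{never opening the commitment}. Instead of revealing the key, the honest prover proves---via a classical zero-knowledge proof for an \class{NP} statement---that the verifier's reported measurement outcomes are consistent with \emph{some} committed key satisfying the required predicate $Q_r$. The simulator then simulates this inner \class{NP} zero-knowledge proof without a witness, and the outer commitment is simply replaced by a commitment to a fixed dummy tuple (indistinguishable by the concealing property). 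Correspondingly, the paper does not rewind over the quantum first message: the challenge is fixed by a coin-flipping subprotocol, simulatable via~\cite{DL09}, so the simulator knows $r$ in advance and can encode a state $\rho_r$ guaranteed to pass term $r$. What remains is a purely statistical claim: the encoding (concatenated Steane code, trap qubits, random permutation, quantum one-time pad) makes the resulting channel nearly independent of which low-energy state was encoded. The one-time pad twirls any malicious verifier action into an XOR attack on the honest measurement string, and the traps catch any attack of Hamming weight large enough to flip a logical codeword.

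A secondary point: your route to a Clifford-form complete problem via magic-state gadgets for non-Clifford gates is not what the paper does, and it introduces exactly the complication the paper avoids. If the witness must carry magic states, the protocol must certify them, which reintroduces non-Clifford checks. The paper instead observes that the gate set $\{\Lambda(P),\,H\otimes H\}$ is universal and shows by direct computation that every propagation term in Kitaev's circuit-to-Hamiltonian construction over this gate set decomposes as a sum of rank-one projections $C^\ast\ket{0^k}\bra{0^k}C$ with $C$ Clifford---no resource states required.
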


\noindent
A few of the desirable features of our proof system are as follows:

\begin{mylist}{\parindent}
\item[1.]
  Our proof system has a simple structure, similar to the classical
  Goldreich-Micali-Wigderson Graph $3$-Coloring proof system (and to
  the so-called $\Sigma$-protocols more generally).
  It can be viewed as a three-phase process: the prover commits to a quantum
  witness, the verifier makes a random challenge, and finally the prover
  responds to the challenge by partial opening of the committed information
  that suffices to certify the validity.

\item[2.]
  All communications in our proof system are classical except for the first
  commitment message, and the verifier can measure the quantum message
  immediately upon its arrival (which has a strong technological appeal).

\item[3.]
  Our protocol is based on mild computational assumptions.
  The sort of bit commitment scheme it requires can be implemented, for
  instance, under the existence of injective one-way functions that are hard to
  invert in quantum polynomial time.

\item[4.]
  Our protocol is prover-efficient.
  It is sound against general quantum provers, but given a valid quantum
  witness, an honest prover only needs to perform efficient quantum
  computations.
  As has already been suggested, aside from the preparation of the first
  quantum message, all of the remaining computations performed by the honest
  prover are classical polynomial-time computations.

\end{mylist}

As a key ingredient of our zero-knowledge proof system, we introduce a new
variant of the \mbox{$k$-local} Hamiltonian problem and prove that it remains
\class{QMA}-complete (with respect to Karp reductions).
The $k$-local Hamiltonian problem asks if the minimum eigenvalue
(or ground state energy in physics parlance) of an $n$-qubit Hamiltonian
$H=\sum_j H_j$, where each $H_j$ is $k$-local
(i.e., acts trivially on all but $k$ of the $n$ qubits), is below a particular
threshold value.
This problem was introduced and proved to be \class{QMA}-complete (for the case
$k=5$) by Kitaev~\cite{KSV02}.
We show that each $H_j$ can be restricted to be realized by a
Clifford operation, followed by a standard basis measurement, and the
\class{QMA}-completeness is preserved.
Beyond its use in this paper, this fact has the potential to provide other
insights into the study of quantum Hamiltonian complexity.
For an arbitrary problem $A\in \class{QMA}$, we can reduce an instance of
$A$ efficiently to an instance of the $k$-local Clifford Hamiltonian problem, and
a valid witness for $A$ can also be transformed into a witness
for the corresponding $k$-local Clifford Hamiltonian problem instance
by an efficient quantum procedure.
As a result, $A$ has a zero-knowledge proof system by composing this reduction
with our zero-knowledge proof system for the $k$-local Clifford Hamiltonian
problem.

Our proof system also employs a new encoding scheme for quantum states, which we
construct by extending the \emph{trap scheme} proposed in~\cite{BGS13}.
While our new scheme can be seen as a \emph{quantum authentication scheme}
(cf.~\cite{BCG+02,BCG+06,ABE10}), it in addition allows performing arbitrary
constant-qubit Clifford circuits and measuring in the computational basis
directly on authenticated data without the need for auxiliary states.
Previously the only known scheme supporting this feature requires
high-dimensional quantum systems (i.e., qudits rather than
qubits)~\cite{BCG+06}, which make it inconvenient in our setting where all
quantum operations are on qubits.

\subsection{Overview of protocol and techniques}

A natural approach to constructing zero-knowledge proofs for~\class{QMA} is
to consider a quantum analogue of the Goldreich-Micali-Wigderson proof system
for Graph $3$-Coloring (which we will hereafter refer to as the GMW $3$-Coloring
proof system).
Let us focus in particular on the local Hamiltonian problem, and consider a
proof system in which the prover holds a quantum witness state for an instance
of this problem, commits to this witness, and receives the challenge from the
verifier (which, let us say, is a random term of the local Hamiltonian).
The prover might then open the commitments of the set of qubits on which the
term acts non-trivially so that the verifier can measure the local energy for
this term and determine acceptance accordingly.

There is a major difficulty when one attempts to carry out such an
approach for \class{QMA}.
The zero-knowledge property of the GMW $3$-Coloring proof system depends
crucially on a structural property of the problem: the honest prover is free to
randomize the three colors used in its coloring, and when the commitments
to the colors of two neighboring vertices are revealed, the verifier will see
just a uniform mixture over all pairs of different colors.
This uniformity of the coloring marginals is important in achieving the
zero-knowledge property of the proof system.
Unlike the case of $3$-Coloring, however, none of the known \class{QMA}-complete
problems under Karp reductions has such desirable properties.
For example, if we use local Hamiltonian problems directly in a GMW-type
proof system, of the sort suggested above, information about the reduced state
of the quantum witness will be leaked to the verifier, possibly violating the
zero-knowledge requirement.

To overcome the difficulty suggested above, we employ several ideas that enable
the prover to ``partially'' open the commitments, revealing only the fact that
the committed state lives in certain subspaces, and nothing further.
Our first technique simplifies the verification circuit for
\class{QMA}-complete problems through the introduction of the local
Clifford-Hamiltonian problem that was already described.
Somewhat more specifically, our formulation of this problem requires every
Hamiltonian term to take the form $C^{\ast} \ket{0^k}\bra{0^k} C$ for some
Clifford operation~$C$.
Because the local Clifford-Hamiltonian problem remains \class{QMA}-complete, it
implies a random Clifford verification procedure for problems
in~\class{QMA}: intuitively, the verification of a quantum witness has been
simplified to a Clifford measurement followed by a classical verification.

The Clifford verification procedure works in harmony with the encryption of
quantum data via the quantum one-time pad and other derived hybrid schemes
that are used by our proof system.
This has the important effect of transforming statements about quantum states
into those about the classical keys of the quantum one-time pad, which
naturally leads to our second main idea: the use of zero-knowledge
proofs for \class{NP} against quantum attacks to simplify the construction
of zero-knowledge proofs for~\class{QMA}.
In our protocol, the verifier measures the encrypted quantum data and asks
the prover to prove, using a zero-knowledge protocol for \class{NP}, that
the decryption of this result is consistent with the verifier accepting.

In fact, if the verifier measures the quantum data according to the
specifications of the protocol, the combination of the Clifford
verification and the use of zero-knowledge proofs for \class{NP}
suffices.
A problem arises, however, if the verifier does not perform the honest
measurement.
Our third technique, inspired by work on quantum
authentication~\cite{BCG+06,ABE10,DNS12,BGS13}, employs a new scheme for
encoding quantum states.
Roughly speaking, if the prover encodes a witness state under our encoding
scheme, then the verifier is essentially forced to perform the measurement
honestly---any attempt to fake a ``logically different'' measurement result will
succeed with negligible probability.
In our proof system, we adapt the trap scheme proposed in~\cite{BGS13} so that
we can perform any constant-sized Clifford operations on authenticated quantum
data followed by computational basis measurements, benefiting along the way
from ideas concerning quantum computation on authenticated quantum data.

The resulting zero-knowledge proof system for \class{QMA} has a similar
overall structure to the GMW $3$-Coloring protocol: the prover encodes the
quantum witness state using a quantum authentication scheme, and sends the
encoded quantum data together with a commitment to the secret keys of the
authentication to the verifier.
The verifier randomly samples a term $C^{\ast} \ket{0^k}\bra{0^k} C$ in the
local Clifford-Hamiltonian problem, applies the operation $C$ transversally
on the encoded quantum data and measures all qubits corresponding to the $k$
qubits of the selected term in the computational basis, and sends the
measurement outcomes to the prover.
The prover and verifier then invoke a quantum-secure zero-knowledge proof for
the \class{NP} statement that the commitment correctly encodes an
authentication key and, under this key, the verifier's measurement outcomes
do not decode to $0^k$.

\subsection{Comparisons to related work}

There has been other work on quantum complexity and theoretical
cryptography, some of which is discussed below, that allows one to conclude
statements having some similarity to our results.
We will argue, however, that with respect to the problem of devising
zero-knowledge quantum interactive proof systems for \class{QMA}, our main
result is stronger in almost all respects.
In addition, we believe that our proof system is appealing both because it
is conceptually simple and represents a natural extension of well-known
classical methods.

\begin{mylist}{\parindent}
\item[1.] \emph{Zero-knowledge proof systems for all of \class{IP}.}
  Hallgren, Kolla, Sen and Zhang~\cite{HKSZ08} proved that classical
  zero-knowledge proof systems for \class{IP}~\cite{BOGG+90} can be made
  secure against malicious quantum verifiers under a certain technical
  condition.
  It appears that this condition holds assuming the existence of a quantum
  computationally hiding commitment scheme.
  Because \class{QMA} is contained in \class{IP}, this would imply a classical
  zero-knowledge protocol for \class{QMA}.
  However, this generic protocol would require a computationally
  \emph{unbounded} prover to carry out the honest protocol, and it is unlikely
  to reduce the round complexity without causing unexpected consequences in
  complexity theory~\cite{GS86,Wat03,GO94}.

\item[2.] \emph{Secure two-party computations.}
  Another approach to constructing zero-knowledge proofs for
  \class{QMA} is to apply the general tool of secure two-party quantum
  computation~\cite{BCG+06,DNS10,DNS12}.
  In particular, we may imagine two parties, a prover and a verifier, jointly
  evaluating the verification circuit of a \class{QMA} problem, with the
  prover holding a quantum witness as his/her private input.
  In principle, one can design a two-party computation protocol so that the
  verifier learns the validity of the statement but nothing more about the
  prover's private input.
  While we believe that a careful analysis could make this approach work, it
  comes at a steep cost.
  First, we need to make significantly stronger computational assumptions, as
  secure quantum two-party computation relies on (at least) secure computations
  of classical functions against quantum adversaries.
  The best-known quantum-secure protocols for classical two-party computation
  assume quantum-secure dense public-key encryption~\cite{HSS15} or similar
  primitives~\cite{LN11}, in contrast to the existence of a quantum
  computationally hiding commitment scheme.\footnote{
    Roughly speaking, this distinction is analogous to ``Cryptomania'' vs
    ``minicrypt'' according to Impagliazzo's five-world paradigm~\cite{Imp95}.}
  Secondly, the protocol obtained this way is only an \emph{argument} system.
  That is, the protocol is only sound against computationally bounded dishonest
  provers.
  Moreover, the generic quantum two-party computation protocol evaluates the
  verification circuit gate by gate, and in particular interactions are
  unavoidable for some (non-Clifford) gates.
  This causes the round complexity to grow in proportion to the size of the
  verification circuit.
  In addition, the communications are inherently quantum, which makes the
  protocol much more demanding from a technological viewpoint.

  On the positive side, through this approach, it is possible to achieve
  negligible soundness error using just one copy of witness state.
  In contrast, our proof system directly inherits the soundness error of the
  most natural and direct verification for the local Clifford-Hamiltonian
  problem (i.e., randomly select a Hamiltonian term and measure).
  If one reduces an arbitrary \class{QMA}-verification procedure to an instance
  of this problem, the resulting soundness guarantee could be significantly
  worse.

\item[3.] \emph{Zero-knowledge proofs for Density Matrix Consistency.}
  It was pointed out by Liu~\cite{Liu06} that the Density Matrix Consistency
  problem, which asks if there exists a global state of $n$ qubits that is
  consistent with a collection of $k$-qubit density matrix marginals, should
  admit a simple zero-knowledge proof system following the GMW $3$-Coloring
  approach.
  This fact was one of the inspirations for our work.
  While it approaches our main result, it does not necessarily admit a
  zero-knowledge proof system for all problems in \class{QMA}, as the
  Density Matrix Consistency problem is only known to be hard for
  \class{QMA} with respect to Cook reductions.

\item[4.] \emph{Other results on Clifford verifications for \class{QMA}.}
  We note that Clifford verification with classical post-processing of
  \class{QMA} was considered in~\cite{MHNF15} using magic states as
  ancillary resources.
  Our construction is arguably simpler, uses only constant-size Clifford
  operations, and most importantly does not require any resource states.
  This helps to avoid checking the correctness of resource states in the final
  zero-knowledge protocol.
  We are hopeful that our techniques will provide new insights to the study of
  quantum Hamiltonian complexity, and may find useful applications in other
  areas of research such as the study of non-local games.
  One byproduct of our Clifford-Hamiltonian reduction proof is an alternative
  proof of the single-qubit measurement verification for \class{QMA} recently
  proposed by~\cite{MNS16}.
\end{mylist}

\subsubsection*{Organization}
The remainder of the paper is organized as follows.
Section~\ref{sec:LCH} describes the variant of the local Hamiltonian problem
mentioned above.
We present our zero-knowledge proof system for \class{QMA} in
Section~\ref{sec:proof-system-description} and prove its completeness and
soundness in Section~\ref{sec:completeness-and-soundness} and zero-knowledge
property in Section~\ref{sec:zero-knowledge}.
We conclude with some remarks and future directions in Section~\ref{sec:con}.
An appendix summarizing basic notation, definitions, and useful primitives for
the construction of our zero-knowledge proof system is also included for
completeness.

\section{The local Clifford-Hamiltonian problem}
\label{sec:LCH}

The local Hamiltonian problem~\cite{KSV02} is a well-known example of a
complete problem for \class{QMA}, provided that certain assumptions are
in place regarding the gap between the ground state energy (i.e., the smallest
eigenvalue) of input Hamiltonians for yes- and no-inputs.
A general and somewhat imprecise formulation of the local Hamiltonian problem
is as follows.
\vspace{2mm}

\noindent
\emph{The $k$-local Hamiltonian problem ($k$-LH)}\vspace{2mm}
\newline
\noindent
\begin{tabular*}{\textwidth}{@{}p{0.4in}@{\hspace*{3mm}}p{5.9in}}
  \emph{Input:} &
  A collection $H_1,\ldots,H_m$ of $k$-local Hamiltonian operators, each
  acting on $n$ qubits and satisfying $0 \leq H_j \leq \I$ for
  $j = 1,\ldots,m$, along with real numbers $\alpha$ and $\beta$ satisfying
  $\alpha < \beta$.\\[2mm]
  \emph{Yes:} &
  There exists an $n$-qubit state $\rho$ such that
  $\ip{\rho}{H_1+\cdots+H_m} \leq \alpha$.\\[2mm]
  \emph{No:} &
  For every $n$-qubit state $\rho$, it holds that
  $\ip{\rho}{H_1+\cdots+H_m} \geq \beta$.
\end{tabular*}
\vspace{2mm}

\noindent
This problem statement is imprecise in the sense that it does not specify how
$\alpha$ and~$\beta$ are to be represented or what requirements are placed on
the gap $\beta - \alpha$ mentioned above.
We will be more precise about these issues when formulating a restricted
version of this problem below, but it is appropriate that we first summarize
what is already known.

It is known that $k$-LH is complete for \class{QMA} (with respect to Karp
reductions) provided $\alpha$ and $\beta$ are input in a reasonable way and
separated by an inverse polynomial gap;
this was first proved by Kitaev~\cite{KSV02} for the case $k = 5$, then by
Kempe and Regev~\cite{KR03} for $k = 3$ and Kempe, Kitaev, and Regev
\cite{KKR06} for $k = 2$.
If one adds the additional requirement that $\alpha$ is exponentially small,
which will be important in the context of this paper, then
$\class{QMA}$-completeness for $k=5$ still follows from Kitaev's proof, but the
proofs of Kempe and Regev and Kempe, Kitaev, and Regev do not imply the same
for $k=3$ and $k=2$.
On the other hand, the work of Bravyi~\cite{Bra11} and Gosset and Nagaj
\cite{GN13}
does establish $\class{QMA}$-completeness for exponentially small $\alpha$, for
$k=4$ and $k=3$, respectively.

The restricted version of the local Hamiltonian we introduce is one in which
each Hamiltonian term $H_j$ is not only $k$-local and satisfies
$0 \leq H_j \leq \I$, but furthermore on the $k$ qubits on which it acts
nontrivially, its action must be given by a rank 1 projection operator of the
form
\begin{equation}
  C_j^{\ast} \ket{0^k}\bra{0^k} C_j,
\end{equation}
for some choice of a $k$-qubit Clifford operation $C_j$.
For brevity, we will refer to any such operator as a
\emph{$k$-local Clifford-Hamiltonian projection}.
The precise statement of our problem variant is as follows.

\noindent
\emph{The $k$-local Clifford-Hamiltonian problem ($k$-LCH)}\vspace{2mm}
\newline
\noindent
\begin{tabular*}{\textwidth}{@{}p{0.4in}@{\hspace{3mm}}p{5.9in}}
  \emph{Input:} &
  A collection $H_1,\ldots,H_m$ of $k$-local Clifford-Hamiltonian
  projections, along with positive integers $p$ and $q$ expressed in unary
  notation (i.e., as strings $1^p$ and $1^q$) and satisfying $2^{p}>q$.
  \\[2mm]
  \emph{Yes:} &
  There exists an $n$-qubit state $\rho$ such that
  $\ip{\rho}{H_1+\cdots+H_m} \leq 2^{-p}$.
  \\[2mm]
  \emph{No:} &
  For every $n$-qubit state $\rho$, it holds that
  $\ip{\rho}{H_1+\cdots+H_m} \geq 1/q$.
\end{tabular*}
\vspace{2mm}

\noindent
It may be noted that, by the particular way we have stated this problem,
we are focusing on a variant of the local Hamiltonian problem in which the
parameter $\alpha$ may be exponentially small and the gap $\beta-\alpha$ is at
least inverse polynomial.

\begin{theorem}
  \label{thm:lch}
  The $5$-local Clifford-Hamiltonian problem is \class{QMA}-complete
  with respect to Karp reductions. Moreover, for any choice of
  promise problem $A = (A_{\yes},A_{\no}) \in \class{QMA}$ and a
  polynomially bounded function $p$, there exists a Karp reduction $f$
  from $A$ to $5$-LCH having the form
  \begin{equation}
    f(x) = \Bigl\langle H_1,\ldots,H_m,1^{p(\abs{x})},1^q\Bigr\rangle
  \end{equation}
  for every $x\in A_{\yes}\cup A_{\no}$.
\end{theorem}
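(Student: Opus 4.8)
The plan is to establish $\class{QMA}$-completeness of $5$-LCH by reducing from the standard $5$-local Hamiltonian problem, which is known to be $\class{QMA}$-complete with exponentially small $\alpha$ by Kitaev's construction. Membership in $\class{QMA}$ is the easy direction: given an instance of $5$-LCH, a verifier receives the purported ground state $\rho$, selects a random term $H_j = C_j^\ast\ket{0^k}\bra{0^k}C_j$, applies $C_j$ to the relevant $k$ qubits, measures in the computational basis, and accepts unless it sees $0^k$; amplifying this by taking several independent copies of the witness and thresholding the number of rejections separates the $2^{-p}$ and $1/q$ cases with the required gap, so $5$-LCH $\in \class{QMA}$.

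For hardness, I would start from Kitaev's circuit-to-Hamiltonian construction applied to an arbitrary $A \in \class{QMA}$: after amplifying the $\class{QMA}$-verification so that the completeness error is $2^{-(p+O(\log))}$ (achievable in polynomial time while keeping the soundness--completeness gap inverse-polynomial), Kitaev produces a $5$-local Hamiltonian $H = H_{\mathrm{in}} + H_{\mathrm{out}} + H_{\mathrm{prop}}$ whose ground-state energy is at most $2^{-p}$ on yes-instances and at least $1/\poly$ on no-instances. The main work is to massage every term of this Hamiltonian into the special Clifford-projection form $C^\ast\ket{0^k}\bra{0^k}C$. The input terms $H_{\mathrm{in}}$ are already of the form $\ket{1}\bra{1}$ on an ancilla qubit, i.e.\ $X\ket{0}\bra{0}X$, so they are Clifford projections trivially. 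The propagation terms $H_{\mathrm{prop}}$, after the usual expansion, are sums of terms like $\ket{0}\bra{0}_t$, $\ket{1}\bra{1}_t$ on clock qubits tensored with $\ket{t}\bra{t}$-type clock checks and gate terms $U_t \otimes \ket{t}\bra{t-1}$; the standard trick is to choose a unary (or domain-wall) clock so that the clock-consistency terms become $2$-local projections onto $\ket{01}\bra{01}$, which is $(X\otimes I)\ket{00}\bra{00}(X\otimes I)$ conjugated appropriately—again Clifford. The genuinely delicate point is the gate-propagation terms: if the circuit's gates $U_t$ are themselves Clifford, then $U_t \otimes \ket{t}\bra{t-1} + \text{h.c.}$ combined with the diagonal clock terms can be assembled (via the identity that $\tfrac12(\ket{0}\bra{0} + \ket{1}\bra{1} - \ket{0}\bra{1}U - U^\ast\ket{1}\bra{0})$ on an added clock qubit equals a projector conjugated by a Clifford when $U$ is Clifford) into a single Clifford projection. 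So the reduction works cleanly provided the underlying verification circuit uses only Clifford gates.

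The key enabling step, then, is to first replace the $\class{QMA}$ verifier's circuit by a Clifford-only circuit. I would invoke gadgetization using magic states: any quantum circuit can be rewritten using only Clifford gates provided the input is padded with a polynomial number of ancilla qubits prepared in the $\ket{T} = T\ket{+}$ magic state, with $T$-gates simulated by gate teleportation through Clifford operations and classically-controlled Pauli corrections (the classical control being absorbable into the Clifford structure of the circuit at the cost of a few extra clock steps, or handled by postselection-style terms). Crucially, the honest prover can prepare these magic-state ancillas efficiently, so the witness transformation is an efficient quantum procedure as claimed; one just tensors the original $\class{QMA}$ witness with the fixed magic states and runs the Clifford verification. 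The locality bookkeeping must be checked: gate teleportation gadgets are constant-size, and with the domain-wall clock encoding all resulting terms remain $5$-local (this is exactly where the constant $5$ is spent, matching Kitaev's count). The main obstacle I anticipate is precisely this locality accounting—verifying that the combination of (i) Cliffordization via magic states, (ii) the clock encoding needed to make clock-consistency terms Clifford projections, and (iii) the assembly of gate terms into single rank-one Clifford projections can all be done simultaneously without the locality creeping above $5$, and that the spectral gap is degraded only polynomially so the $1/q$ no-case bound survives. Once the locality and gap are pinned down, reading off $p$ (set by the completeness amplification) and $q$ (set by the no-case gap) in unary gives the reduction $f(x) = \langle H_1,\ldots,H_m,1^{p(\abs{x})},1^q\rangle$, completing the proof.
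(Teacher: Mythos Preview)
Your proposal has a genuine gap at the step you flag as ``genuinely delicate.'' The claim that when $U_t$ is Clifford the propagation projector
\[
\tfrac{1}{2}\bigl(\I\otimes\I - \ket{1}\bra{0}\otimes U_t - \ket{0}\bra{1}\otimes U_t^\ast\bigr)
\]
can be written as a (sum of) rank-one Clifford projection(s) is false in general. The relevant rank-one components take the form $\tfrac{1}{\sqrt{2}}(\ket{0}\ket{z} - \ket{1}U_t\ket{z})$, i.e.\ $\Lambda(U_t)$ applied to $\ket{-}\ket{z}$, and $\Lambda(U_t)$ is typically third-level in the Clifford hierarchy even when $U_t$ is Clifford. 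Concretely, for $U_t = H$ the propagation projector is spanned by $\ket{-}\ket{\gamma_0}$ and $\ket{+}\ket{\gamma_1}$ with $\ket{\gamma_{0,1}}$ the $\pm 1$ eigenvectors of $H$; these involve $\cos(\pi/8)$ amplitudes and are not stabilizer states, so no decomposition into Clifford projections exists. Since any Cliffordized circuit (magic states or not) will contain $H$ gates, your route stalls here. The magic-state detour also creates a second obstruction you do not address: if $\ket{T}$ ancillas are part of the witness, the $H_{\text{in}}$ terms must penalize deviation from $\ket{T}$, and $\ket{T^\perp}\!\bra{T^\perp}$ is itself not a Clifford projection---omitting such a check breaks soundness against a prover who substitutes bad ancillas.

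The paper's proof avoids both issues by a different and more direct device: it compiles the verification circuit over the (non-Clifford) universal gate set $\{\Lambda(P),\, H\otimes H\}$---pairing every Hadamard with a dummy Hadamard on a spare ancilla---and then verifies by explicit calculation that the propagation projectors for these two specific gates happen to decompose as sums of four rank-one Clifford projections each. The diagonal terms $H_{\text{in}}$, $H_{\text{out}}$, $H_{\text{clock}}$ are already computational-basis projectors and hence trivially Clifford. No magic states, no Cliffordization of the circuit, and no change to Kitaev's locality or gap analysis are needed; the entire content is the two four-term identities for $\Lambda(P)$ and $H\otimes H$.
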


\begin{proof}
  The containment of the 5-local Clifford-Hamiltonian problem in \class{QMA}
  follows from the fact that the 5-LH problem is in \class{QMA} for the
  same choice of the ground state energy bounds.
  It therefore remains to prove the statement concerning the
  \class{QMA}-hardness of the 5-LCH problem.

  Let $A = (A_{\yes}, A_{\no})$ be any promise problem in \class{QMA} and let
  $p$ be a polynomially bounded function.
  Using a standard error reduction procedure for \class{QMA}, one may conclude
  that there exists a polynomial-time generated collection
  $\{V_x\,:\,x\in A_{\yes}\cup A_{\no}\}$ of measurement circuits having
  these properties:
  \begin{mylist}{\parindent}
  \item[1.] If $x\in A_{\yes}$, there exists a state $\rho$ such that
    $V_x(\rho) = 1$ with probability $1-2^{-p(\abs{x})}$.
  \item[2.] If $x\in A_{\no}$, then for all quantum states $\rho$ representing
    valid inputs to $V_x$ it holds that $V_x(\rho) = 1$ with probability at
    most $1/2$.
  \end{mylist}

  It is known that $\{\Lambda(P), H\}$ is a universal gate set for quantum
  computation, so there would be no loss of generality in assuming each
  $V_x$ is a quantum circuit using gates from this set, together with a supply
  of ancillary qubits initialized to the state $\ket{0}$.
  For technical reasons (which are discussed later) we will assume something
  marginally stronger, which is that each $V_x$ uses gates from the set
  $\{\Lambda(P), H\otimes H\}$.
  That is, every Hadamard gate appearing in $V_x$ is paired with another
  Hadamard gate to be applied at the same time but on a different qubit.
  Note that for any circuit composed of gates from the set $\{\Lambda(P), H\}$,
  this stronger condition is easily met by adding to this circuit a number of
  additional Hadamard gates on an otherwise unused ancilla qubit.

  Now consider the $5$-local circuit-to-Hamiltonian construction of
  Kitaev~\cite{KSV02}, for a given choice of $V_x$.
  In this construction, the resulting Hamiltonians have the form
  \begin{equation}
    H_{\text{total}} = H_{\text{in}} + H_{\text{out}} +
    H_{\text{clock}} + H_{\text{prop}},
  \end{equation}
  where the terms check the initialization, readout, validity of unary clock,
  and propagation of computation respectively.
  It follows from Kitaev's proof that, for $x\in A_{\yes}$, the resulting
  Hamiltonian $H_{\text{total}}$ has ground state energy at most
  $\smash{2^{-p(|x|)}}$, and for $x\in A_{\no}$ the ground state energy of
  $H_\text{total}$ is at least~$1/q(|x|)$, for some polynomially bounded
  function $q$.
  To complete the proof, it suffices to demonstrate that each of these terms
  can be expressed as a sum of Clifford-Hamiltonian projections.

  The first three terms, $H_{\text{in}}$, $H_{\text{out}}$, and
  $H_{\text{clock}}$, can be expressed as sums of Clifford-Hamiltonian
  projections easily, as they are all projection operators that are diagonal in
  the standard basis.
  The propagation term has the form
  $H_{\text{prop}} = \sum_{t=1}^T H_{\text{prop},t}$ where
  each operator $H_{\text{prop},t}$ takes the form
  \begin{equation}
    \begin{split}
      H_{\text{prop},t} & = \frac{1}{2} \bigl[
      (\ket{100}\bra{100}_{t-1,t,t+1} +
      \ket{110}\bra{110}_{t-1,t,t+1})\otimes\I\\
      & \qquad -\ket{110}\bra{100}_{t-1,t,t+1} \otimes U_t
      -\ket{100}\bra{110}_{t-1,t,t+1} \otimes U_t^\ast \bigr]\\
      & = \ket{10}\bra{10}_{t-1,t+1} \otimes \frac{1}{2} \bigl[ \I_t
      \otimes \I - \ket{1}\bra{0}_t \otimes U_t - \ket{0}\bra{1}_t
      \otimes U_t^\ast \bigr].
    \end{split}
  \end{equation}
  Here, the first three qubits (indexed by $t-1$, $t$, and $t+1$) refer to
  qubits in a clock register and $U_t$ represents the $t$-th unitary gate in
  $V_x$.
  To prove that each propagation operator $H_{\text{prop},t}$ can be expressed
  as a sum of Clifford-Hamiltonian projections, it suffices to prove the same
  for every projection of the form
  \begin{equation}
    \label{eq:projection-from-propagation}
    \frac{1}{2} \bigl[ \I \otimes \I - \ket{1}\bra{0} \otimes U -
      \ket{0}\bra{1} \otimes U^\ast \bigr],
  \end{equation}
  for $U$ being either $\Lambda(P)$ or $H\otimes H$.

  In the case that $U = \Lambda(P)$, one has that the projection
  \eqref{eq:projection-from-propagation} is the sum of the four
  Clifford-Hamiltonian projections corresponding to these vectors:
  \begin{equation}
    \begin{split}
      \ket{-}\ket{00} & = (ZH\otimes\I\otimes\I)\ket{000},\\
      \ket{-}\ket{01} & = (ZH\otimes\I\otimes X)\ket{000},\\
      \ket{-}\ket{10} & = (ZH\otimes X\otimes\I)\ket{000},\\
      \ket{\circlearrowright}\ket{11} & = (P^{\ast} H \otimes X \otimes
      X)\ket{000},\\
    \end{split}
  \end{equation}
  where $\ket{\circlearrowright} = (\ket{0}- i\ket{1})/\sqrt{2}$.
  In the case that $U = H\otimes H$, one has that the projection
  \eqref{eq:projection-from-propagation} is the sum of the four
  Clifford-Hamiltonian projections corresponding to these vectors:
  \begin{equation}
    \begin{split}
      \ket{\psi_1} & =
      \bigl( \ket{000} - \ket{011} - \ket{101} - \ket{110}\bigr)/2,\\
      \ket{\psi_2} & =
      \bigl( \ket{000} + \ket{011} - \ket{100} - \ket{111}\bigr)/2,\\
      \ket{\psi_3} & =
      \bigl( \ket{001} - \ket{010} + \ket{101} - \ket{110}\bigr)/2,\\
      \ket{\psi_4} & =
      \bigl( \ket{001} + \ket{010} - \ket{100} + \ket{111}\bigr)/2.
    \end{split}
  \end{equation}
  All four of these vectors are obtained by a Clifford operation applied to
  the all-zero state.
  In particular, when the following Clifford circuits are applied to the state
  $\ket{000}$, the states $\ket{\psi_1}$, $\ket{\psi_2}$, $\ket{\psi_3}$, and
  $\ket{\psi_4}$ are obtained:\vspace{2mm}
  \begin{center}
    \begin{tikzpicture}[scale=0.85]
      \node (In1) at (-2,1) {};
      \node (In2) at (-2,0) {};
      \node (In3) at (-2,-1) {};
      \node (Out1) at (2,1) {};
      \node (Out2) at (2,0) {};
      \node (Out3) at (2,-1) {};
      \node (H1) at (-1,1) [draw, fill=ChannelColor] {$H$};
      \node (H2) at (-1,0) [draw, fill=ChannelColor] {$H$};
      \node[circle, fill, minimum size = 4pt, inner sep=0mm]
      (Control1) at (-0.25,1) {};
      \node[circle, draw, minimum size = 7pt, inner sep=0mm]
      (Target1) at (-0.25,-1) {};
      \draw (Control1.center) -- (Target1.south);
      \node[circle, fill, minimum size = 4pt, inner sep=0mm]
      (Control2) at (0.25,0) {};
      \node[circle, draw, minimum size = 7pt, inner sep=0mm]
      (Target2) at (0.25,-1) {};
      \draw (Control2.center) -- (Target2.south);
      \node (CZ) at (1,-1) [draw, fill=ChannelColor] {$Z$};
      \node (Z) at (1,1) [draw, fill=ChannelColor] {$Z$};
      \draw (In1) -- (H1) -- (Z) -- (Out1);
      \draw (In2) -- (H2) -- (Out2);
      \draw (In3) -- (CZ) -- (Out3);
      \node[circle, fill, minimum size = 4pt, inner sep=0mm]
      (Control3) at (1,0) {};
      \draw (Control3.center) -- (CZ.north);
    \end{tikzpicture}
    \quad
    \begin{tikzpicture}[scale=0.85]
      \node (In1) at (-2,1) {};
      \node (In2) at (-2,0) {};
      \node (In3) at (-2,-1) {};
      \node (Out1) at (1,1) {};
      \node (Out2) at (1,0) {};
      \node (Out3) at (1,-1) {};
      \node (H1) at (-1,1) [draw, fill=ChannelColor] {$H$};
      \node (H2) at (-1,0) [draw, fill=ChannelColor] {$H$};
      \node (Z) at (0,1) [draw, fill=ChannelColor] {$Z$};
      \draw (In1) -- (H1) -- (Z) -- (Out1);
      \draw (In2) -- (H2) -- (Out2);
      \draw (In3) -- (Out3);
      \node (Phantom) at (1,-1) {\phantom{$Y$}};
      \node[circle, fill, minimum size = 4pt, inner sep=0mm]
      (Control) at (0,0) {};
      \node[circle, draw, minimum size = 7pt, inner sep=0mm]
      (Target) at (-0,-1) {};
      \draw (Control.center) -- (Target.south);
    \end{tikzpicture}
    \quad
    \begin{tikzpicture}[scale=0.85]
      \node (In1) at (-2,1) {};
      \node (In2) at (-2,0) {};
      \node (In3) at (-2,-1) {};
      \node (Out1) at (2,1) {};
      \node (Out2) at (2,0) {};
      \node (Out3) at (2,-1) {};
      \node (H1) at (-1,1) [draw, fill=ChannelColor] {$H$};
      \node (H2) at (-1,0) [draw, fill=ChannelColor] {$H$};
      \node (X) at (-1,-1) [draw, fill=ChannelColor] {$X$};
      \node (Z) at (1,0) [draw, fill=ChannelColor] {$Z$};
      \draw (In1) -- (H1) -- (Out1);
      \draw (In2) -- (H2) -- (Z) -- (Out2);
      \draw (In3) -- (X) -- (Out3);
      \node[circle, fill, minimum size = 4pt, inner sep=0mm]
      (Control) at (0,0) {};
      \node[circle, draw, minimum size = 7pt, inner sep=0mm]
      (Target) at (-0,-1) {};
      \draw (Control.center) -- (Target.south);
    \end{tikzpicture}
    \quad
    \begin{tikzpicture}[scale=0.85]
      \node (In1) at (-2,1) {};
      \node (In2) at (-2,0) {};
      \node (In3) at (-2,-1) {};
      \node (Out1) at (2,1) {};
      \node (Out2) at (2,0) {};
      \node (Out3) at (2,-1) {};
      \node (H1) at (-1,1) [draw, fill=ChannelColor] {$H$};
      \node (H2) at (-1,0) [draw, fill=ChannelColor] {$H$};
      \node (Z) at (1,-1) [draw, fill=ChannelColor] {$Z$};
      \node (Z2) at (1,1) [draw, fill=ChannelColor] {$Z$};
      \node (X) at (-1,-1) [draw, fill=ChannelColor] {$X$};
      \draw (In1) -- (H1) -- (Z2) -- (Out1);
      \draw (In2) -- (H2) -- (Out2);
      \draw (In3) -- (X) -- (Z) -- (Out3);
      \node[circle, fill, minimum size = 4pt, inner sep=0mm]
      (Control1) at (-0.25,1) {};
      \node[circle, draw, minimum size = 7pt, inner sep=0mm]
      (Target1) at (-0.25,-1) {};
      \draw (Control1.center) -- (Target1.south);
      \node[circle, fill, minimum size = 4pt, inner sep=0mm]
      (Control2) at (0.25,0) {};
      \node[circle, draw, minimum size = 7pt, inner sep=0mm]
      (Target2) at (0.25,-1) {};
      \draw (Control2.center) -- (Target2.south);
      \node[circle, fill, minimum size = 4pt, inner sep=0mm]
      (Control3) at (1,0) {};
      \draw (Control3.center) -- (Z.north);
    \end{tikzpicture}
  \end{center}
  This completes the proof.
\end{proof}

\begin{remark}
  If one is given a witness to a given \class{QMA} problem $A$, it is possible
  to efficiently compute a witness to the corresponding $k$-local Hamiltonian
  problem instance through Kitaev's reduction.
  Our reduction also inherits this property.
\end{remark}

\begin{remark}
  There is no loss of generality in setting $q = 1$ in the statement of the
  $k$-LCH problem, meaning that Theorem~\ref{thm:lch} holds for this somewhat
  simplified problem statement.
  This may be proved by repeating each Hamiltonian term $q$ times in a given
  problem instance and adjusting $p$ as necessary.
\end{remark}

\begin{remark}
  States of the form $C\ket{0^k}$, for a Clifford operation $C$, are stabilizer
  states of $k$ qubits.
  Theorem~\ref{thm:lch} therefore implies that there exists a \class{QMA}
  verification procedure in which the verifier randomly chooses a $k$-qubit
  stabilizer state and checks whether the quantum witness state is orthogonal
  to it.
\end{remark}

\begin{remark}
  If one takes $U = H$ in \eqref{eq:projection-from-propagation}, the resulting
  projection operator projects onto the two-dimensional subspace spanned by the
  vectors $\ket{-}\ket{\gamma_0}$ and $\ket{+}\ket{\gamma_1}$, where
  \begin{equation}
    \label{eq:H-eigenvectors}
    \ket{\gamma_0} = \cos(\pi/8)\ket{0} + \sin(\pi/8)\ket{1}
    \quad\text{and}\quad
    \ket{\gamma_1} = \sin(\pi/8)\ket{0} - \cos(\pi/8)\ket{1}
  \end{equation}
  are eigenvectors of $H$.
  This projection cannot be expressed as a sum of Clifford-Hamiltonian
  projections, which explains why we needed to replace $H$ with $H\otimes H$ in
  the proof above.

  While considering this projection is not useful for proving
  Theorem~\ref{thm:lch}, we do obtain from it a different result.
  In particular, we obtain an alternative proof of a result due to
  Morimae, Nagaj, and Schuch~\cite{MNS16} establishing that single-qubit
  measurements and classical post-processing are sufficient for
  \class{QMA} verification.
  Reference~\cite{MNS16}  actually provides two proofs of this fact, one based
  on measurement-based quantum computation and the other based on a
  local-Hamiltonian problem type of approach similar to what we propose.
  While their local-Hamiltonian approach does not work for one-sided error
  (or $\class{QMA}_1$) verifications, ours does (as does their measurement-based
  quantum computation proof).
\end{remark}

\section{Description of the proof system}
\label{sec:proof-system-description}

In this section we describe our zero-knowledge proof system for the
local Clifford-Hamiltonian problem.
The main steps of the proof system are described in the subsections that follow,
and the entire proof system is summarized in
Figure~\ref{figure:proof-system-summary}.
Properties of the proof system, including completeness, soundness, and the
zero-knowledge property, are discussed in later sections of the paper.

As suggested previously, our proof system makes use of a bit commitment scheme,
and in the interest of simplicity in explaining and analyzing the proof system
we shall assume that this scheme is non-interactive.
One could, however, replace this non-interactive commitment scheme by a
different scheme (such as Naor's scheme with a 1-round commitment
phase~\cite{Nao91}).
Throughout this section it is to be assumed that an instance of the $k$-local
Clifford-Hamiltonian problem has been selected.
The instance describes Clifford-Hamiltonian projections
$H_1,\ldots,H_m$, each given by $H_j = C_j^{\ast}\ket{0^k}\bra{0^k} C_j$ for
$k$-qubit Clifford operations $C_1,\ldots,C_m$, along with a specification of
which of the~$n$ qubits these projections act upon.
The proof system does not refer to the parameters $p$ and $q$ in the
description of the $k$-local Clifford Hamiltonian problem, as these parameters
are only relevant to the performance of the proof system and not its
implementation.
It must be assumed, however, that the completeness parameter $2^{-p}$ is
a negligible function of the entire problem instance size in order for the
proof system to be zero-knowledge, and we will make this assumption hereafter.

\begin{figure}
  \begin{mdframed}[style=figstyle,innerleftmargin=10pt,innerrightmargin=10pt]

  \begin{trivlist}
  \item
    \emph{Prover's encoding step:}\vspace{1mm}

    The prover selects a tuple $(t,\pi,a,b)$ uniformly at random, where
    $t = t_1\cdots t_n$ for $t_1,\ldots,t_n\in\{0,+,\circlearrowright\}^N$,
    $\pi\in S_{2N}$, and $a = a_1\cdots a_n$ and $b = b_1\cdots b_n$ for
    $a_1,\ldots,a_n,b_1,\ldots,b_n\in\{0,1\}^{2N}$.
    The witness state contained in qubits $(\reg{X}_1,\ldots,\reg{X}_n)$
    is encoded into qubit tuples
    \begin{equation}
      \bigl(\reg{Y}^1_1,\ldots,\reg{Y}^1_{2N}\bigr),\,\ldots,\,
      \bigl(\reg{Y}^n_1,\ldots,\reg{Y}^n_{2N}\bigr)
    \end{equation}
    as described in the main text.
    These qubits are sent to the verifier, along with a commitment to the
    tuple $(\pi,a,b)$.
    \vspace{2mm}

  \item
    \emph{Coin flipping protocol:} \vspace{1mm}

    The prover and verifier engage in a coin flipping protocol, choosing
    a string $r$ of a fixed length uniformly at random.
    This random string $r$ determines a Hamiltonian term
    $H_r = C_r^{\ast}\ket{0^k}\bra{0^k}C_r$ that is to be tested.
    \vspace{2mm}

  \item
    \emph{Verifier's measurement:} \vspace{1mm}

    The verifier applies the Clifford operation $C_r$ transversally to the
    qubits
    \begin{equation}
      \bigl(\reg{Y}^{i_1}_1,\ldots,\reg{Y}^{i_1}_{2N}\bigr),\,\ldots,\,
      \bigl(\reg{Y}^{i_k}_1,\ldots,\reg{Y}^{i_k}_{2N}\bigr),
    \end{equation}
    and measures all of these qubits in the standard basis, for
    $(i_1,\ldots,i_k)$ being the indices of the qubits upon which the
    Hamiltonian term $H_r$ acts nontrivially.
    The result of this measurement is sent to the prover.

    \vspace{2mm}

  \item
    \emph{Prover's verification and response:} \vspace{1mm}

    The prover checks that the verifier's measurement results are consistent
    with the states of the trap qubits and the concatenated Steane code,
    aborting the proof system if not (causing the verifier to reject).
    In case the measurement results are consistent, the prover demonstrates
    that these measurement results are consistent with its prior commitment to
    $(\pi,a,b)$ and with the Hamiltonian term $H_r$, through a classical
    zero-knowledge proof system for the corresponding $\class{NP}$ statement
    described in the main text.
    The verifier accepts or rejects accordingly.
  \end{trivlist}
  \caption{Summary of the zero-knowledge proof system for the LCH problem}
  \label{figure:proof-system-summary}

  \end{mdframed}
\end{figure}

\subsection{Prover's witness encoding}
\label{sec:encoding}

Suppose $\reg{X} = (\reg{X}_1,\ldots,\reg{X}_n)$ is an $n$-tuple of single-qubit
registers.
These qubits are assumed to initially be in the prover's possession, and
store an $n$-qubit quantum state $\rho$ representing a possible witness for the
instance of the $k$-LCH problem under consideration.

The first step of the proof system requires the prover to encode the
state of~$\reg{X}$, using a scheme that consists of four steps.
Throughout the description of these steps it is to be assumed that $N$ is a
polynomially bounded function of the input size and is an even positive integer
power of 7.
In effect, $N$ acts as a security parameter (for the zero-knowledge property of
the proof system), and we take it to be an even power of 7 so that it may be
viewed as a number of qubits that could arise from a concatenated Steane code
allowing for a transversal application of Clifford operations, as described in
Section~\ref{sec:Steane-code} (in the appendix).
In particular, through an appropriate choice of $N$, one may guarantee that this
code has any desired polynomial lower-bound for the minimum non-zero Hamming
weight of its underlying classical code.

\begin{mylist}{\parindent}
\item[1.]
  For each $i=1,\ldots,n$, the qubit $\reg{X}_i$ is encoded into qubits
  $(\reg{Y}^i_1,\ldots,\reg{Y}^i_{N})$ by means of the concatenated Steane
  code.
  This results in the $N$-tuples
  \begin{equation}
    \label{eq:N-qubits}
    \bigl(\reg{Y}^1_1,\ldots,\reg{Y}^1_{N}\bigr),\,\ldots,\,
    \bigl(\reg{Y}^n_1,\ldots,\reg{Y}^n_{N}\bigr).
  \end{equation}
\item[2.]
  To each of the $N$-tuples in \eqref{eq:N-qubits}, the prover concatenates an
  additional $N$ \emph{trap qubits}, with each trap qubit being initialized
  to one of the single qubit pure states $\ket{0}$, $\ket{+}$, or
  $\ket{\circlearrowright}$, selected independently and uniformly at random.
  This results in qubits
  \begin{equation}
    \label{eq:2N-qubits}
    \bigl(\reg{Y}^1_1,\ldots,\reg{Y}^1_{2N}\bigr),\,\ldots,\,
    \bigl(\reg{Y}^n_1,\ldots,\reg{Y}^n_{2N}\bigr).
  \end{equation}
  The prover stores the string $t = t_1\cdots t_n$, for
  $t_1,\ldots,t_n\in\{0,+,\circlearrowright\}^N$ representing the randomly
  chosen states of the trap qubits.
\item[3.]
  A random permutation $\pi\in S_{2N}$ is selected, and the qubits in each of
  the $2N$-tuples \eqref{eq:2N-qubits} are permuted according to $\pi$.
  (Note that it is a single permutation $\pi$ that is selected and applied to
  all of the $2N$-tuples simultaneously.)

\item[4.]
  The quantum one-time pad is applied independently to each qubit in
  \eqref{eq:2N-qubits} (after they are permuted in step 3).
  That is, for $a_i,b_i\in\{0,1\}^{2N}$ chosen independently and uniformly
  at random, the unitary transformation $X^{a_i} Z^{b_i}$ is applied to
  $(\reg{Y}^i_1,\ldots,\reg{Y}^i_{2N})$, and the strings
  $a_i$ and $b_i$ are stored by the prover, for each $i=1,\ldots,n$.
\end{mylist}

\noindent
The randomness required by these encoding steps may be described by a tuple
$(t,\pi,a,b)$, where $t$ is the string representing the states
of the trap qubits described in step 2, $\pi\in S_{2N}$ is the permutation
applied in step 3, and $a = a_1\cdots a_n$ and $b = b_1\cdots b_n$ are binary
strings representing the Pauli operators applied in the one-time pad in step 4.
After performing the above encoding steps, the prover sends the resulting
qubits
\begin{equation}
  \reg{Y} = \bigl(\bigl(\reg{Y}^1_1,\ldots,\reg{Y}^1_{2N}\bigr),\,\ldots,\,
  \bigl(\reg{Y}^n_1,\ldots,\reg{Y}^n_{2N}\bigr)\bigr),
\end{equation}
along with a commitment
\begin{equation}
  z = \alg{commit}((\pi,a,b),s)
\end{equation}
to the tuple $\bigl(\pi,a,b\bigr)$, to the verifier.
Here we assume that $s$ is a random string chosen by the prover that allows for
this commitment.
(It is not necessary for the prover to commit to the selection of the trap
qubit states indicated by $t$, although it would not affect the properties of
the proof system if it were modified so that the prover also committed to the
trap qubit state selections.)

\subsection{Verifier's random challenge}
\label{sec:challenge}

Upon receiving the prover's encoded witness and commitment, the verifier issues
a challenge:
for a randomly selected index $j\in\{1,\ldots,m\}$, the verifier will check
that the $j$-th Hamiltonian term
\begin{equation}
  \label{eq:Clifford-Hamiltonian-term}
  H_j = C_j^{\ast} \ket{0^k}\bra{0^k} C_j
\end{equation}
is not violated.
Generally speaking, the verifier's actions in issuing this challenge are as
follows: for a certain collection of qubits, the verifier applies the Clifford
operation $C_j$ transversally to those qubits, performs a measurement with
respect to the standard basis, sends the outcomes to the prover, and then
expects the prover to demonstrate that the obtained outcomes are valid (in the
sense to be described later).

The randomly selected Hamiltonian term is to be determined by a binary
string~$r$, of a fixed length $\lceil \log m\rceil$, that should be viewed as
being chosen uniformly at random.
(In a moment we will discuss the random choice of $r$, which will be
given by the output of a coin flipping protocol that happens to be uniform
for honest participants.)
It is not important exactly how the binary strings of length
$\lceil \log m\rceil$ are mapped to the indices $\{1,\ldots,m\}$, so long as
every index is represented by at least one string---so that for a uniformly
chosen string $r$, each Hamiltonian term~$j$ is selected with a nonnegligible
probability.
We will write $H_r$ and $C_r$ in place of $H_j$ and $C_j$, and refer to the
Hamiltonian term determined by $r$, when it is convenient to do this.

It would be natural to allow the verifier to randomly determine which
Hamiltonian term is to be tested---but, as suggested above, we will assume that
the challenge is determined through a \emph{coin flipping protocol}
rather than leaving the choice to the verifier.
More specifically, throughout the present subsection, it should be assumed that
the random choice of the string $r$ that determines which challenge is issued is
the result of independent iterations of a commitment-based coin-flipping
protocol (i.e., the honest prover commits to a random $y_i\in\{0,1\}$, the
honest verifier selects $z_i\in\{0,1\}$ at random, the prover reveals $y_i$, and
the two participants agree that the $i$-th random bit of $r$ is
$r_i = y_i\oplus z_i$).
This guarantees (assuming the security of the commitment protocol) that the
choices are truly random, and greatly simplifies the analysis of the
zero-knowledge property of the proof system.
The use of such a protocol might not actually be necessary for the security of
the proof system, but we leave the investigation of whether it is necessary to
future work.

Now, let $(i_1,\ldots,i_k)$ denote the indices of the qubits upon which the
Hamiltonian term determined by the random string $r$ acts nontrivially.
The verifier applies the Clifford operation $C_r$ independently to each of the
$k$-qubit tuples
\begin{equation}
  \bigl(\reg{Y}^{i_1}_1,\ldots,\reg{Y}^{i_k}_{1}\bigr),
  \ldots,
  \bigl(\reg{Y}^{i_1}_{2N},\ldots,\reg{Y}^{i_k}_{2N}\bigr),
\end{equation}
which is equivalent to saying that $C_r$ is applied transversally to the tuples
\begin{equation}
  \label{eq:k-tuples-of-qubits}
  \bigl(\reg{Y}^{i_1}_1,\ldots,\reg{Y}^{i_1}_{2N}\bigr),\,\ldots,\,
  \bigl(\reg{Y}^{i_k}_1,\ldots,\reg{Y}^{i_k}_{2N}\bigr)
\end{equation}
that encode the qubits on which the Hamiltonian term $H_r$ acts nontrivially.
The qubits \eqref{eq:k-tuples-of-qubits} are then measured with respect to the
standard basis, and the results are sent to the prover.
We will let
\begin{equation}
  u_{i_1},\ldots,u_{i_k} \in \{0,1\}^{2N}
\end{equation}
denote the binary strings representing the verifier's standard basis
measurement outcomes (or claimed outcomes) corresponding to the measurements
of the tuples \eqref{eq:k-tuples-of-qubits}.

\subsection{Prover's check and response}
\label{sec:prover-check}

Upon receiving the verifier's claimed measurement outcomes corresponding to the
randomly selected Hamiltonian term, the prover first checks to see that these
outcomes could indeed have come from the measurements specified above, and then
tries to convince the verifier that these measurement outcomes are consistent
with the selected term.

In more detail, suppose that the Hamiltonian term determined by $r$ has
been challenged.
As above, we assume that this term acts nontrivially on the $k$ qubits indexed
by the $k$-tuple $(i_1,\ldots,i_k)$, and we will write
\begin{equation}
  u = u_{i_1}\cdots u_{i_k} \in \{0,1\}^{2kN}
\end{equation}
to denote the verifier's claimed standard basis measurement outcomes.

To define the prover's check for this string, it will be helpful to
first define a predicate $R_r$, which is a function of $t$, $\pi$, and $u$,
and essentially represents the prover's check \emph{after} it has made an
adjustment to the verifier's response to account for the one-time pad.
For each $i\in \{i_1,\ldots,i_k\}$, define strings $y_i,z_i\in\{0,1\}^N$ so that
\begin{equation}
  \pi(y_i z_i) = u_i.
\end{equation}
The predicate $R_r$ takes the value 1 if and only if these two conditions are
met:
\begin{mylist}{\parindent}
\item[1.]
  $y_i\in\D_N$ for every $i\in\{i_1,\ldots,i_k\}$, and $y_i\in\D_N^1$ for at
  least one index $i\in\{i_1,\ldots,i_k\}$.
\item[2.]
  $\bigl\langle
  z_{i_1} \cdots z_{i_k}
  \,\big|\,
  C_r^{\otimes N}
  \,\big|\,
  t_{i_1} \cdots t_{i_k}
  \bigr\rangle \not= 0$.
\end{mylist}
(Here we have written $\ket{t_{i_1} \cdots t_{i_k}}$ to denote the pure state
of $kN$ qubits obtained by tensoring the states $\ket{0}$, $\ket{+}$, and
$\ket{\circlearrowright}$ in this most natural way.)
The first condition concerns measurement outcomes corresponding to non-trap
qubits, and reflects the condition that these measurement outcomes are proper
encodings of binary values---but not all of which encode 0.
The second condition concerns the consistency of the verifier's measurements
with the trap qubits.

Next, we will define a predicate $Q_r$, which is a function of the variables
$t$, $\pi$, $a$, $b$, and $u$, where $t$, $\pi$, and $u$ are as above and
$a,b\in\{0,1\}^{2nN}$ refer to the strings used for the one-time pad.
The predicate $Q_r$ represents the prover's actual check, in the case that the
Hamiltonian term determined by $r$ has been selected, including an adjustment
to account for the one-time pad.
Let $c_1,\ldots,c_n,d_1,\ldots,d_n\in\{0,1\}^{2N}$ be the unique strings
for which the equation
\begin{equation}
  \label{eq:Clifford-and-Pauli}
  C_r^{\otimes 2N} \bigl(X^{a_1}Z^{b_1}\otimes \cdots \otimes
  X^{a_n}Z^{b_n}\bigr)
  = \alpha \bigl(X^{c_1}Z^{d_1}\otimes\cdots\otimes X^{c_n}Z^{d_n}\bigr)
  C_r^{\otimes 2N}
\end{equation}
holds for some choice of $\alpha\in\{1,i,-1,-i\}$.
The Clifford operation $C_r$ acts trivially on those qubits indexed by strings
outside of the set $\{i_1,\ldots,i_k\}$, so it must be the case that
$c_i = a_i$ and $d_i = b_i$ for $i\not\in\{i_1,\ldots,i_k\}$, but for those
indices $i\in\{i_1,\ldots,i_k\}$ it may be the case that
$c_i \not= a_i$ and $d_i \not= b_i$.
We will also write $c = c_1\cdots c_n$ and $d = d_1\cdots d_n$ for the sake of
convenience.
Given a description of the Clifford operation $C_r$ it is possible to
efficiently compute $c$ and $d$ from $a$ and $b$.
Having defined $c$ and $d$, we may now express the predicate $Q_r$ as follows:
\begin{equation}
  Q_r(t,\pi,u,a,b) = R_r\bigl(
  t,\pi,u \oplus c_{i_1}\cdots c_{i_k}\bigr).
\end{equation}
In essence, the predicate $Q_r$ checks the validity of the verifier's claimed
measurement results by first adjusting for the one-time pad, then referring to
$R_r$.

The prover evaluates the predicate $Q_r$, and aborts the proof system if
the predicate evaluates to~0 (as this is indicative of a dishonest verifier).
Otherwise, the prover aims to convince the verifier that the measurement
outcomes $u$ are consistent with the prover's encoding, and also that they are
not in violation of the Hamiltonian term $H_r$.
It does this specifically by engaging in a classical zero-knowledge proof
system for the following \class{NP} statement:
there exists a random string $s$ and an encoding key $(t,\pi,a,b)$ such that
(i)~$\alg{commit}((\pi,a,b),s)$ matches the prover's initial commitment~$z$, and
(ii)~$Q_r(t,\pi,u,a,b) = 1$.

It will be convenient later, in the analysis of the proof system, to sometimes
view $r$ as being an input to the predicates defined above.
Specifically, we define predicates
\begin{equation}
  Q(r,t,\pi,a,b,u) = Q_r(t,\pi,a,b,u)
  \quad\text{and}\quad
  R(r,t,\pi,u) = R_r(t,\pi,u)
\end{equation}
for this purpose.

\section{Completeness and soundness of the proof system}
\label{sec:completeness-and-soundness}

It is evident that the proof system described in the previous section is
complete.
For a given instance of the local Clifford Hamiltonian problem, if the
prover and verifier both behave honestly, as suggested in the description
of the proof system, the verifier will accept with precisely the same
probability that would be obtained by randomly selecting a Hamiltonian term,
measuring the original $n$-qubit witness state against the corresponding
projection, and accepting or rejecting accordingly.
For a positive problem instance, this acceptance probability is at least
$1 - 2^{-p}$ (for every choice of a random string $r$).

Next we will consider the soundness of the proof system.
We will prove that on a negative instance of the problem, the honest verifier
must reject with nonnegligible probability.
The prover initially sends to the verifier the qubits
\begin{equation}
  \label{eq:soundness-2Ntuples}
  \bigl(\reg{Y}^1_1,\ldots,\reg{Y}^1_{2N}\bigr),\,\ldots,\,
  \bigl(\reg{Y}^n_1,\ldots,\reg{Y}^n_{2N}\bigr),
\end{equation}
along with a commitment $z = \alg{commit}((\pi,a,b),s)$ to a tuple $(\pi,a,b)$.
We have assumed that the commitment is perfectly binding, so there is a
well-defined tuple $(\pi,a,b)$ that is determined by the prover's commitment
$z$.
We may assume without loss of generality that this tuple has the proper form
(meaning that $\pi\in S_{2N}$ is a permutation and $a$ and $b$ are binary
strings of length $2nN$, as specified in the description of the proof system),
as a commitment to a string not of this form must lead to rejection with
high probability in all cases.
Let $\xi$ be the state of the qubits
\begin{equation}
  \label{eq:soundness-qubits}
  \bigl(\reg{Y}^1_1,\ldots,\reg{Y}^1_{N}\bigr),\,\ldots,\,
  \bigl(\reg{Y}^n_1,\ldots,\reg{Y}^n_{N}\bigr)
\end{equation}
that is obtained by inverting the quantum one-time pad with respect to the
strings $a$ and $b$, inverting the permutation of each of the tuples
\eqref{eq:soundness-2Ntuples} with respect to the permutation $\pi$, and
discarding the last $N$ qubits within each tuple (i.e., the trap qubits).
For an honest prover, the state $\xi$ would be the state obtained by encoding
the original witness state using the concatenated Steane code---although
in general it cannot be assumed that $\xi$ arises in this way.
Although the verifier is not capable of recovering the state $\xi$ on its own,
because it does not know $(\pi,a,b)$, it will nevertheless be helpful to refer
to the state $\xi$ for the purposes of establishing the soundness condition of
the proof system.

We will define a collection of $N$-qubit projections operators and a channel
from $N$ qubits to one that will be useful for establishing soundness.
First, let
\begin{equation}
  \Pi_0 = \sum_{x\in\D_N^0} \ket{x}\bra{x}
  \qquad\text{and}\qquad
  \Pi_1 = \sum_{x\in\D_N^1} \ket{x}\bra{x},
\end{equation}
where $\D_N^0$ and $\D_N^1$ are subsets of $\{0,1\}^N$ representing
classical code words of the concatenated Steane code.
A standard basis measurement of any qubit encoded using this code will
necessarily yield an outcome in one of these two sets:
an encoded $\ket{0}$ state yields an outcome in $\D_N^0$, and an encoded
$\ket{1}$ state yields an outcome in $\D_N^1$.
The projections $\Pi_0$ and $\Pi_1$ therefore correspond to these two
possibilities, while the projection operator $\I - (\Pi_0 + \Pi_1)$ corresponds
to the situation in which a standard basis measurement has yielded a result
outside of the classical code space $\D_N = \D_N^0 \cup \D_N^1$.
Also define projections
\begin{equation}
  \Delta_0 = \frac{\I^{\otimes N} + Z^{\otimes N}}{2}
  \qquad\text{and}\qquad
  \Delta_1 = \frac{\I^{\otimes N} - Z^{\otimes N}}{2},
\end{equation}
which are the projections onto the spaces spanned by all even- and odd-parity
standard basis states, respectively.
It holds that $\Pi_0 \leq \Delta_0$ and $\Pi_1 \leq \Delta_1$, as the codewords
in $\D_N^0$ all have even parity and the codewords in $\D_N^1$ all have odd
parity.
Finally, define a channel $\Xi_N$, mapping $N$ qubits to 1 qubit, as follows:
\begin{equation}
  \Xi_N(\sigma) = \frac{
    \ip{\I^{\otimes N}}{\sigma} \I +
    \ip{X^{\otimes N}}{\sigma} X +
    \ip{Y^{\otimes N}}{\sigma} Y +
    \ip{Z^{\otimes N}}{\sigma} Z}{2},
\end{equation}
for every $N$-qubit operator $\sigma$.
It is evident that this mapping preserves trace, and is completely positive
when $N \equiv 1\:(\bmod\:4)$, which holds because $N$ is an even power of $7$.
One may observe that the adjoint mapping to $\Xi_N$ is given by
\begin{equation}
  \Xi_N^{\ast}(\tau) = \frac{
    \ip{\I}{\tau} \I^{\otimes N} +
    \ip{X}{\tau} X^{\otimes N} +
    \ip{Y}{\tau} Y^{\otimes N} +
    \ip{Z}{\tau} Z^{\otimes N}}{2},
\end{equation}
and satisfies
\begin{equation}
  \Xi_N^{\ast}(\ket{0}\bra{0}) = \Delta_0
  \qquad\text{and}\qquad
  \Xi_N^{\ast}(\ket{1}\bra{1}) = \Delta_1.
\end{equation}

Now, consider the state
$\rho = \Xi_N^{\otimes n}(\xi)$ of the qubits $(\reg{X}_1,\ldots,\reg{X}_n)$
that is obtained from $\xi$ when $\Xi_N$ is applied independently to each of
the $N$-tuples of qubits in \eqref{eq:soundness-qubits}.
We will prove that the verifier must reject with nonnegligible probability
for a given choice of $r$ provided that $\rho$ violates the corresponding
Hamiltonian term $H_r$.
Because every $n$-qubit state creates a nonnegligible violation in at least one
Hamiltonian term for a negative problem instance, this will suffice to prove the
soundness of the proof system.

For each random string $r$ generated by the coin flipping procedure, one may
define a measurement on the state $\xi$ that corresponds to the verifier's
actions and final decision to accept or reject given this choice of $r$,
assuming the prover behaves optimally after the coin flipping and the
verifier's measurement take place.
Specifically, corresponding to the Hamiltonian term
$H_r = C_r^{\ast}\ket{0^k}\bra{0^k}C_r$, acceptance is represented by a
projection operator $\Lambda_r$ on the qubits
\begin{equation}
  \bigl(\reg{Y}^{i_1}_1,\ldots,\reg{Y}^{i_1}_N\bigr),\ldots,
  \bigl(\reg{Y}^{i_k}_1,\ldots,\reg{Y}^{i_k}_N\bigr)
\end{equation}
defined as follows:
\begin{equation}
  \Lambda_r =
  \sum_{\substack{z\in\{0,1\}^k\\z \not= 0^k}}
  \bigl(C_r^{\otimes N}\bigr)^{\ast}
  \bigl(\Pi_{z_1}\otimes\cdots\otimes\Pi_{z_k}\bigr)
  \bigl(C_r^{\otimes N}\bigr).
\end{equation}
The probability the verifier rejects, for a given choice of $r$, is therefore
at least $1 - \ip{\Lambda_r}{\xi}$.
Because $\Pi_0 \leq \Delta_0$ and $\Pi_1 \leq \Delta_1$, the probability of
rejection is therefore at least
\begin{equation}
  1 - \sum_{\substack{z\in\{0,1\}^k\\z \not= 0^k}}
  \ip{\bigl(C_r^{\otimes N}\bigr)^{\ast}
    \bigl(\Delta_{z_1}\otimes\cdots\otimes\Delta_{z_k}\bigr)
    \bigl(C_r^{\otimes N}\bigr)}{\xi}
  =
  \ip{\bigl(C_r^{\otimes N}\bigr)^{\ast}
    \bigl(\Delta_0\otimes\cdots\otimes\Delta_0\bigr)
    \bigl(C_r^{\otimes N}\bigr)}{\xi}.
\end{equation}
By considering properties of the channel $\Xi_N$, we conclude that the verifier
rejects with probability at least
\begin{equation}
  \begin{split}
    & \ip{\bigl(C_r^{\otimes N}\bigr)^{\ast}
      \bigl(\Xi_N^{\ast}(\ket{0}\bra{0})\otimes\cdots\otimes
      \Xi_N^{\ast}(\ket{0}\bra{0})\bigr)
      \bigl(C_r^{\otimes N}\bigr)}{\xi}\\
    = & \ip{\bigl(\Xi_N^{\otimes k}\bigr)^{\ast}
      \bigl(C_r^{\ast} \ket{0^k}\bra{0^k} C_r\bigr)}{\xi}
    = \ip{C_r^{\ast} \ket{0^k}\bra{0^k} C_r}{\Xi_N^{\otimes k}(\xi)}
    = \ip{H_r}{\rho}.
  \end{split}
\end{equation}
Here we have used the observation that
\begin{equation}
  \Xi_N^{\otimes k}\bigl(C^{\otimes N}\sigma \bigl(C^{\otimes
    N}\bigr)^{\ast}\bigr)
  = C \Xi_N^{\otimes k}(\sigma) C^{\ast}
\end{equation}
for every $k$-qubit Clifford operation $C$ and every $kN$-qubit state $\sigma$,
which may be verified directly by considering the definition of $\Xi_N$.

Intuitively speaking, the argument above shows that whatever state a malicious
prover sends in the first message, one can essentially decode that state with
respect to a highly simplified variant of the encoding scheme
(after peeling off the quantum one-time pad and discarding the trap qubits),
recovering a state that would pass the Hamiltonian energy test with at least
the same probability as the verifier's acceptance probability in our
zero-knowledge proof system.
Because this probability must be bounded away from 1 on average for any
no-instance of the problem, we obtain a soundness guarantee for the proof
system.

\section{Zero-knowledge property of the proof system}
\label{sec:zero-knowledge}

In this section we will prove that the proof system described in
Section~\ref{sec:proof-system-description} is quantum computational
zero-knowledge, assuming that the commitment scheme used in the proof system is
unconditionally binding and quantum computationally concealing.
The proof has several steps, to be presented below, but first we will
summarize the main technical goal of the proof.

Figure~\ref{fig:honest-interaction} shows a diagram of the interaction between
the honest participants in the proof system.
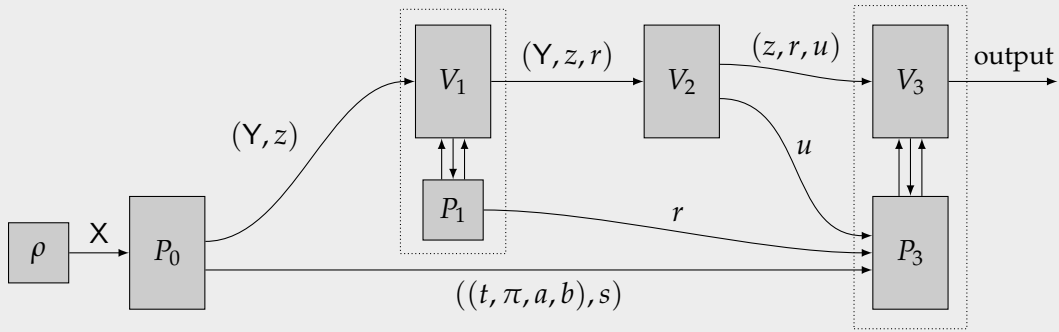
\begin{figure}[!t]
  \begin{mdframed}[style=figstyle]
  \begin{center}
    \begin{tikzpicture}[scale=0.38,
        turn/.style={draw, minimum height=15mm, minimum width=10mm,
          fill = ChannelColor, text=ChannelTextColor},
        smallturn/.style={draw, minimum height=8mm, minimum width=8mm,
          fill = ChannelColor, text=ChannelTextColor},
        >=latex]

      \node (in) at (-18.5,-3) [smallturn] {$\rho$};
      \node (P0) at (-14,-3) [turn] {$P_0$};
      \node (P1) at (-4,-1.5) [smallturn] {$P_1$};
      \node (P3) at (12,-3) [turn] {$P_3$};
      \node (V1) at (-4,3) [turn] {$V_1$};
      \node (V2) at (4,3) [turn] {$V_2$};
      \node (V3) at (12,3) [turn] {$V_3$};
      \node (out) at (17.5,3) [minimum width = 0mm] {};
      \node (phantom) at (-19.5,0) {};

      \node at (12,0) [draw, densely dotted, minimum height=43mm, minimum
          width=15mm] {};
      \node at (-4,1.25) [draw, densely dotted, minimum height=3.25cm,
        minimum width = 14mm] {};
      \node (rightphantom) at (18,3) [minimum width = 1mm] {};
      \draw[->] (in) -- (P0) node [above, midway] {$\reg{X}$};
      \draw[->] ([yshift=4mm]P0.east) .. controls +(right:30mm) and
      +(left:30mm) .. (V1.west) node [above left, pos=0.5] {$(\reg{Y},z)$};
      \draw[->] ([yshift=-6mm]P0.east) -- ([yshift=-6mm]P3.west)
      node [below, midway] {$((t,\pi,a,b),s)$};
      \draw[->] (P1.east) .. controls +(right:40mm) and +(left:40mm) ..
      (P3.west) node [above, midway] {$r$};
      \draw[->] (V1.east) -- (V2.west) node [above, midway] {$(\reg{Y},z,r)$};
      \draw[->] ([yshift=-6mm]V2.east) .. controls +(right:30mm) and
      +(left:30mm) .. ([yshift=6mm]P3.west) node [right, pos=0.4] {$u$};
      \draw[->] ([yshift=6mm]V2.east) .. controls +(right:20mm) and
      +(left:20mm) .. (V3.west) node [above, midway] {$(z,r,u)$};
      \draw[->] ([xshift=-4mm]P1.north) -- ([xshift=-4mm]V1.south);
      \draw[->] (V1.south) -- (P1.north);
      \draw[->] ([xshift=4mm]P1.north) -- ([xshift=4mm]V1.south);
      \draw[->] ([xshift=-4mm]P3.north) -- ([xshift=-4mm]V3.south);
      \draw[->] (V3.south) -- (P3.north);
      \draw[->] ([xshift=4mm]P3.north) -- ([xshift=4mm]V3.south);
      \draw[->] (V3.east) -- (out.west) node [above, pos=0.6] {\small output};
    \end{tikzpicture}
  \end{center}
  \caption{The interaction between honest participants.
    The prover's quantum witness $\rho$ is encoded into $\reg{Y}$ together
    with the encoding key $(t,\pi,a,b)$ by the prover's action $P_0$.
    The string $z$ represents the prover's commitment to $(\pi,a,b)$ and
    the string $s$ represents random bits used by the prover to implement
    this commitment.
    The string $r$ represents the random bits generated by the coin flipping
    protocol, which is depicted within the dotted rectangle on the left.
    The string $u$ represents the verifier's standard basis measurements for a
    subset of the qubits of $\reg{Y}$ determined by the challenge corresponding
    to the random string $r$.
    The classical zero-knowledge protocol is depicted within the dotted
    rectangle on the right.}
  \label{fig:honest-interaction}
  \end{mdframed}
\end{figure}
A cheating verifier aiming to extract knowledge from the prover might, of
course, not follow the prescribed actions of the honest verifier.
In particular, the cheating verifier may take a quantum register as input, store
quantum information in between its actions, and output a quantum register.
Figure~\ref{fig:cheating-verifier} illustrates such a cheating verifier
interacting with the honest prover.
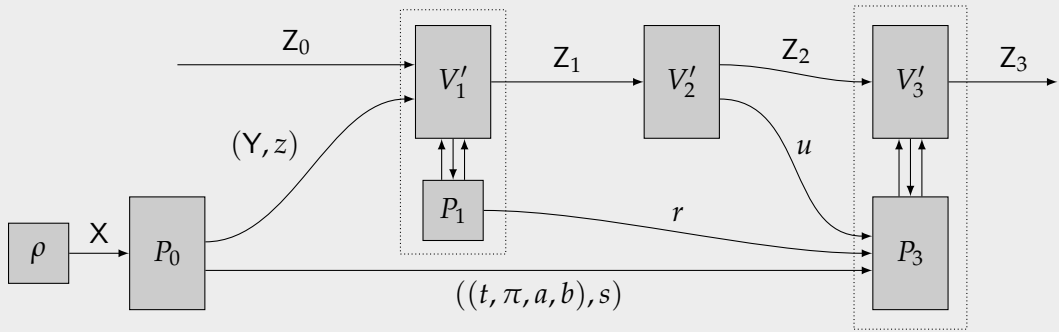
\begin{figure}[!t]
  \begin{mdframed}[style=figstyle]
  \begin{center}
    \begin{tikzpicture}[scale=0.38,
        turn/.style={draw, minimum height=15mm, minimum width=10mm,
          fill = ChannelColor, text=ChannelTextColor},
        smallturn/.style={draw, minimum height=8mm, minimum width=8mm,
          fill = ChannelColor, text=ChannelTextColor},
        >=latex]

      \node (in) at (-18.5,-3) [smallturn] {$\rho$};
      \node (P0) at (-14,-3) [turn] {$P_0$};
      \node (P1) at (-4,-1.5) [smallturn] {$P_1$};
      \node (P3) at (12,-3) [turn] {$P_3$};
      \node (V0) at (-14,3) {};
      \node (V1) at (-4,3) [turn] {$V_1'$};
      \node (V2) at (4,3) [turn] {$V_2'$};
      \node (V3) at (12,3) [turn] {$V_3'$};
      \node (out) at (17.5,3) {};
      \node (phantom) at (-19.5,0) {};

      \draw[->] ([yshift=6mm]V0.east) -- ([yshift=6mm]V1.west)
      node [above, midway] {$\reg{Z}_0$};
      \node at (12,0) [draw, densely dotted, minimum height=43mm, minimum
          width=15mm] {};
      \node at (-4,1.25) [draw, densely dotted, minimum height=3.25cm,
        minimum width = 14mm] {};
      \node (rightphantom) at (18,3) [minimum width = 1mm] {};
      \draw[->] (in) -- (P0) node [above, midway] {$\reg{X}$};
      \draw[->] ([yshift=4mm]P0.east) .. controls +(right:30mm) and
      +(left:30mm) .. ([yshift=-6mm]V1.west) node [above left, pos=0.5]
      {$(\reg{Y},z)$};
      \draw[->] ([yshift=-6mm]P0.east) -- ([yshift=-6mm]P3.west)
      node [below, midway] {$((t,\pi,a,b),s)$};
      \draw[->] (P1.east) .. controls +(right:40mm) and +(left:40mm) ..
      (P3.west) node [above, midway] {$r$};
      \draw[->] (V1.east) -- (V2.west) node [above, midway] {$\reg{Z}_1$};
      \draw[->] ([yshift=-6mm]V2.east) .. controls +(right:30mm) and
      +(left:30mm) .. ([yshift=6mm]P3.west) node [right, pos=0.4] {$u$};
      \draw[->] ([yshift=6mm]V2.east) .. controls +(right:20mm) and
      +(left:20mm) .. (V3.west) node [above, midway] {$\reg{Z}_2$};
      \draw[->] ([xshift=-4mm]P1.north) -- ([xshift=-4mm]V1.south);
      \draw[->] (V1.south) -- (P1.north);
      \draw[->] ([xshift=4mm]P1.north) -- ([xshift=4mm]V1.south);
      \draw[->] ([xshift=-4mm]P3.north) -- ([xshift=-4mm]V3.south);
      \draw[->] (V3.south) -- (P3.north);
      \draw[->] ([xshift=4mm]P3.north) -- ([xshift=4mm]V3.south);
      \draw[->] (V3.east) -- (out.west) node [above, pos=0.6] {$\reg{Z}_3$};
    \end{tikzpicture}
  \end{center}
  \caption{A potentially dishonest verifier takes an auxiliary quantum register
    $\reg{Z}_0$ as input, may store quantum information (represented by
    registers $\reg{Z}_1$ and $\reg{Z}_2$), and outputs quantum information
    stored in register $\reg{Z}_3$.}
  \label{fig:cheating-verifier}
  \end{mdframed}
\end{figure}
The goal of the proof is to demonstrate that, for any cheating verifier
of the form suggested by Figure~\ref{fig:cheating-verifier}, there exists an
efficient simulator that implements a channel from $\reg{Z}_0$ to $\reg{Z}_3$
that is computationally indistinguishable from the channel implemented by the
cheating verifier and prover interaction.
In particular, the simulator does not have access to the witness state $\rho$.

\subsubsection*{Step 1: simulating the coin flipping protocol}

By the results of~\cite{DL09}, there must exist an efficient simulator
$S_1$ for the interaction of $V_1'$ with $P_1$.
To be more precise, for $S_1$ being given an input of the same form as $V_1'$,
along with a uniformly chosen random string $r$ of the length required by our
proof system, the resulting action is quantum computationally indistinguishable
from $V_1'$ interacting with $P_1$.
Figure~\ref{fig:coin-flipping-simulated} illustrates the process that is
obtained by performing this substitution.
\begin{figure}
  \begin{mdframed}[style=figstyle]
  \begin{center}
    \begin{tikzpicture}[scale=0.38,
        turn/.style={draw, minimum height=15mm, minimum width=10mm,
          fill = ChannelColor, text=ChannelTextColor},
        smallturn/.style={draw, minimum height=8mm, minimum width=8mm,
          fill = ChannelColor, text=ChannelTextColor},
        >=latex]

      \node (in) at (-18.5,-3) [smallturn] {$\rho$};
      \node (P0) at (-14,-3) [turn] {$P_0$};
      \node (coins) at (-4,-1.75) [smallturn] {\small coins};
      \node (P3) at (12,-3) [turn] {$P_3$};
      \node (V0) at (-14,3) {};
      \node (V1) at (-4,3) [turn] {$S_1$};
      \node (V2) at (4,3) [turn] {$V_2'$};
      \node (V3) at (12,3) [turn] {$V_3'$};
      \node (out) at (17.5,3) {};
      \node (phantom) at (-19.5,0) {};

      \draw[->] ([yshift=6mm]V0.east) -- ([yshift=6mm]V1.west)
      node [above, midway] {$\reg{Z}_0$};
      \node at (12,0) [draw, densely dotted, minimum height=43mm, minimum
          width=15mm] {};
      \node (rightphantom) at (18,3) [minimum width = 1mm] {};
      \draw[->] (in) -- (P0) node [above, midway] {$\reg{X}$};
      \draw[->] ([yshift=4mm]P0.east) .. controls +(right:30mm) and
      +(left:30mm) .. ([yshift=-6mm]V1.west) node [above left, pos=0.5]
      {$(\reg{Y},z)$};
      \draw[->] ([yshift=-6mm]P0.east) -- ([yshift=-6mm]P3.west)
      node [below, midway] {$((t,\pi,a,b),s)$};
      \draw[->] (coins.east) .. controls +(right:40mm) and +(left:40mm) ..
      (P3.west) node [above, midway] {$r$};
      \draw[->] (V1.east) -- (V2.west) node [above, midway] {$\reg{Z}_1$};
      \draw[->] ([yshift=-6mm]V2.east) .. controls +(right:30mm) and
      +(left:30mm) .. ([yshift=6mm]P3.west) node [right, pos=0.4] {$u$};
      \draw[->] ([yshift=6mm]V2.east) .. controls +(right:20mm) and
      +(left:20mm) .. (V3.west) node [above, midway] {$\reg{Z}_2$};
      \draw[->] (coins.north) -- (V1.south) node [left, midway] {$r$};
      \draw[->] ([xshift=-4mm]P3.north) -- ([xshift=-4mm]V3.south);
      \draw[->] (V3.south) -- (P3.north);
      \draw[->] ([xshift=4mm]P3.north) -- ([xshift=4mm]V3.south);
      \draw[->] (V3.east) -- (out.west) node [above, pos=0.6] {$\reg{Z}_3$};
    \end{tikzpicture}
  \end{center}
  \caption{The interaction corresponding to the execution of the coin
    flipping protocol has been replaced by a simulator $S_1$ along with a
    true random string generator (labeled \emph{coins}).}
  \label{fig:coin-flipping-simulated}
  \end{mdframed}
\end{figure}
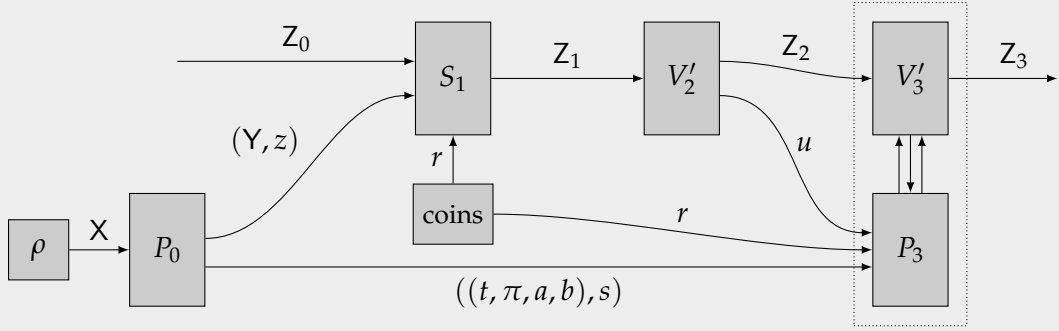
As the simulator $S_1$ together with the true random string generator is
computationally indistinguishable from the interaction between $V_1'$ and
$P_1$, the process illustrated in Figure~\ref{fig:coin-flipping-simulated}
is computationally indistinguishable from the process illustrated in
Figure~\ref{fig:cheating-verifier}.
It therefore suffices for us to prove that the process illustrated in
Figure~\ref{fig:coin-flipping-simulated} can be efficiently simulated
(without access to the witness state $\rho$).

\subsubsection*{Step 2: simulating the classical zero-knowledge protocol}

In the next step of the proof, we replace the interaction between a cheating
verifier $V_3'$ and the prover~$P_3$ in the classical zero-knowledge protocol
by an efficient simulation.

The prover holds an encoding key $(t,\pi,a,b)$ along with a random string
$s$ it has used to commit to the tuple $(\pi,a,b)$.
The commitment $z = \alg{commit}((\pi,a,b),s)$ was sent to the verifier,
together with the encoding register $\reg{Y}$, in the first step of the proof
system.
The verifier sends a string $u$ that, in the honest case, represents the output
of a measurement of some subset of the qubits of $\reg{Y}$ with respect to the
standard basis, after the transversal application of a Clifford operation
depending on the random choice of $r$.
The statement that the honest prover aims to prove in the classical
zero-knowledge protocol is that there exists an encoding key $(t,\pi,a,b)$
along with a string $s$ such that $z = \alg{commit}((\pi,a,b),s)$ and
$Q(r,t,\pi,a,b,u) = 1$.
The honest prover always holds an encoding key $(t,\pi,a,b)$ and a binary
string $s$ for which $z = \alg{commit}((\pi,a,b),s)$, and if it is the case that
$Q(r,t,\pi,a,b,u) = 0$, the honest prover aborts.
By the assumption that the classical zero-knowledge protocol is indeed
computational zero-knowledge, there must therefore exist an efficient simulator
$S_3$ so that the process described in
Figure~\ref{fig:zero-knowledge-simulated} is computationally indistinguishable
from the one described by Figure~\ref{fig:coin-flipping-simulated}.
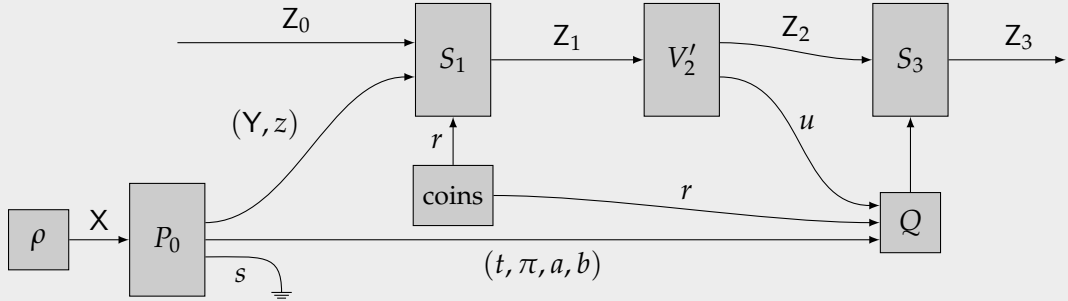
\begin{figure}[!t]
  \begin{mdframed}[style=figstyle]
  \begin{center}
    \begin{tikzpicture}[scale=0.38,
        turn/.style={draw, minimum height=15mm, minimum width=10mm,
          fill = ChannelColor, text=ChannelTextColor},
        smallturn/.style={draw, minimum height=8mm, minimum width=8mm,
          fill = ChannelColor, text=ChannelTextColor},
        >=latex]

      \node (in) at (-18.5,-3.3) [smallturn] {$\rho$};
      \node (P0) at (-14,-3.3) [turn] {$P_0$};
      \node (coins) at (-4,-1.75) [smallturn] {\small coins};
      \node (Q) at (12,-2.7) [smallturn] {$Q$};
      \node (V0) at (-14,3) {};
      \node (V1) at (-4,3) [turn] {$S_1$};
      \node (V2) at (4,3) [turn] {$V_2'$};
      \node (V3) at (12,3) [turn] {$S_3$};
      \node (out) at (17.5,3) [inner sep = 0mm] {};
      \node (nowhere) at (-10,-5.5) {};
      \node (phantom) at (-19.5,0) {};

      \draw[->] ([yshift=6mm]V0.east) -- ([yshift=6mm]V1.west)
      node [above, midway] {$\reg{Z}_0$};
      \node (rightphantom) at (18,3) [minimum width = 1mm] {};
      \draw[->] (in) -- (P0) node [above, midway] {$\reg{X}$};
      \draw[->] ([yshift=6mm]P0.east) .. controls +(right:30mm) and
      +(left:30mm) .. ([yshift=-6mm]V1.west) node [above left, pos=0.5]
      {$(\reg{Y},z)$};
      \draw[->] (P0.east) -- ([yshift=-6mm]Q.west)
      node [below, midway] {$(t,\pi,a,b)$};
      \draw[->] (coins.east) .. controls +(right:40mm) and +(left:40mm) ..
      (Q.west) node [above, midway] {$r$};
      \draw[->] (V1.east) -- (V2.west) node [above, midway] {$\reg{Z}_1$};
      \draw[->] ([yshift=-6mm]V2.east) .. controls +(right:30mm) and
      +(left:30mm) .. ([yshift=6mm]Q.west) node [right, pos=0.4] {$u$};
      \draw[->] ([yshift=6mm]V2.east) .. controls +(right:20mm) and
      +(left:20mm) .. (V3.west) node [above, midway] {$\reg{Z}_2$};
      \draw[->] (coins.north) -- (V1.south) node [left, midway] {$r$};
      \draw[->] (Q.north) -- (V3.south);
      \draw[->] (V3.east) -- (out.west) node [above, pos=0.6] {$\reg{Z}_3$};

      \draw ([yshift=-6mm]P0.east) .. controls +(right:20mm) and
      +(up:14mm) .. (nowhere.north) node [below left, pos=0.4] {$s$};

      \draw (nowhere.north west) -- (nowhere.north east);
      \draw ([xshift=1mm, yshift=-1mm]nowhere.north west) --
      ([xshift=-1mm, yshift=-1mm]nowhere.north east);
      \draw ([xshift=2mm, yshift=-2mm]nowhere.north west) --
      ([xshift=-2mm, yshift=-2mm]nowhere.north east);

    \end{tikzpicture}
  \end{center}
  \caption{The interaction corresponding to the execution of the classical
    zero-knowledge protocol has been replaced by a simulator $S_3$ along with
    the predicate~$Q$.
    It is assumed that when the output of $Q$ is 0, the simulator $S_3$ behaves
    as the cheating verifier $V_3'$ would when the prover aborts the proof
    system.
    The string $s$ produced by $P_0$ in forming the commitment to $(\pi,a,b)$
    is discarded.}
  \label{fig:zero-knowledge-simulated}
  \end{mdframed}
\end{figure}
Note that the string $s$ used by $P_0$ to form the commitment
$z = \alg{commit}((\pi,a,b),s)$ can be discarded immediately after $P_0$ is
run.

\subsubsection*{Step 3: eliminating the commitment}

\begin{figure}[!t]
  \begin{mdframed}[style=figstyle]
  \begin{center}
    \begin{tikzpicture}[scale=0.38,
        turn/.style={draw, minimum height=15mm, minimum width=10mm,
          fill = ChannelColor, text=ChannelTextColor},
        smallturn/.style={draw, minimum height=8mm, minimum width=8mm,
          fill = ChannelColor, text=ChannelTextColor},
        >=latex]

      \node at (-3,1.2) [draw, densely dotted, minimum width = 9cm,
        minimum height = 3.5cm, fill=black!10] {};

      \node (in) at (-18.5,-3.3) [minimum width = 1mm] {};
      \node (commit) at (-10,-1) [smallturn] {
        \small\begin{tabular}{@{}c@{}}
          commit\\ $(\pi_0,a_0,b_0)$
        \end{tabular}
      };
      \node (V0) at (-21,3) {};
      \node (P0) at (-21,-2) {};
      \node (S1) at (-4,3) [turn] {$S_1$};
      \node (V2) at (4,3) [turn] {$V_2'$};
      \node (V3) at (12,3) {};
      \node (coins) at (-4,-6) {};
      \node (Q) at (15,-2) {};
      \node (S3) at (15,3) {};
      \node at (7.75,-2.5) {$V'$};

      \draw[->] ([yshift=8mm]V0.east) -- ([yshift=8mm]S1.west)
      node [above, pos = 0.1] {$\reg{Z}_0$};

      \draw[->] (commit.north) .. controls +(up:20mm) and
      +(left:30mm) .. ([yshift=-8mm]S1.west) node [below, pos=0.8] {$z$};

      \draw[->] (S1.east) -- (V2.west) node [above, midway] {$\reg{Z}_1$};

      \draw[->] (P0.east) .. controls +(right:60mm) and
      +(left:130mm) .. (S1.west) node [above left, pos=0.15] {$\reg{Y}$};

      \draw[->] ([yshift=-6mm]V2.east) .. controls +(right:30mm) and
      +(left:30mm) .. (Q.west) node [above, pos=0.8] {$u$};

      \draw[->] (coins) -- (S1) node [right,pos=0.15] {$r$};

      \draw[->] ([yshift=6mm]V2.east) -- ([yshift=6mm]S3.west) node [above,
        pos = 0.8] {$\reg{Z}_2$};

    \end{tikzpicture}
  \end{center}
  \caption{The commitment to a fixed tuple $(\pi_0,a_0,b_0)$, the simulator
    $S_1$, and the dishonest verifier action $V_2'$ may be merged into a single
    efficiently implementable action $V'$ that represents an attack against
    the encoding scheme.}
  \label{fig:V'}
  \end{mdframed}
\end{figure}
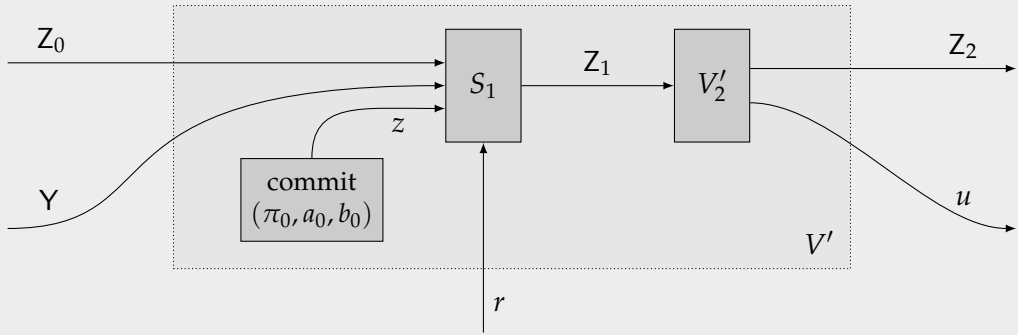
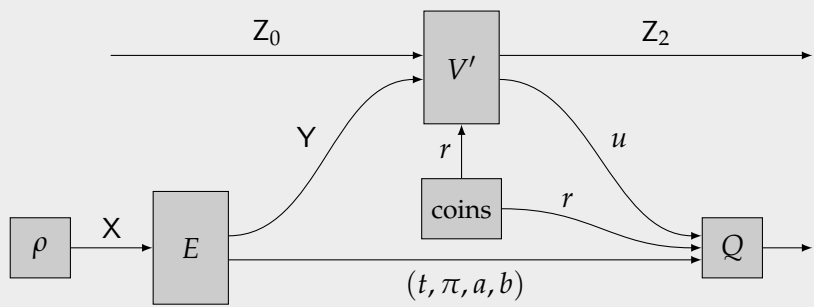
\begin{figure}[t]
  \begin{mdframed}[style=figstyle]
  \begin{center}
    \begin{tikzpicture}[scale=0.40,
        turn/.style={draw, minimum height=15mm, minimum width=10mm,
          fill = ChannelColor, text=ChannelTextColor},
        smallturn/.style={draw, minimum height=8mm, minimum width=8mm,
          fill = ChannelColor, text=ChannelTextColor},
        protocol/.style={draw, densely dotted, minimum height=43mm, minimum
          width=14mm},
        measure/.style={draw, minimum width=7mm, minimum height=7mm,
          fill = ChannelColor},
        >=latex]

      \node (V0) at (-16,3) {};
      \node (V4) at (10,3) {};
      \node (P0) at (-13,-3) [turn] {$E$};
      \node (V1) at (-4,3) [turn] {$V'$};
      \node (P3) at (5,-3) [smallturn] {$Q$};
      \node (V3) at (8,3) {};
      \node (P4) at (8,-3) {};
      \node (coins) at (-4,-1.7) [smallturn] {\small coins};
      \node (rho) at (-18,-3) [smallturn] {$\rho$};
      \node (phantom) at (-19.5,0) {};

      \draw[->] (coins.north) -- (V1.south) node [left, midway] {$r$};

      \draw[->] (coins.east) .. controls +(right:30mm) and
      +(left:30mm) .. (P3.west) node [above, pos=0.3] {$r$};

      \draw[->] ([yshift=4mm]P0.east) .. controls +(right:30mm) and
      +(left:30mm) .. ([yshift=-4mm]V1.west) node [above left, pos=0.5]
      {$\reg{Y}$};

      \draw[->] ([yshift=4mm]V0.east) -- ([yshift=4mm]V1.west)
      node [above, midway] {$\reg{Z}_0$};

      \draw[->] ([yshift=-4mm]P0.east) -- ([yshift=-4mm]P3.west)
      node [below, midway] {$(t,\pi,a,b)$};

      \draw[->] ([yshift=-4mm]V1.east) .. controls +(right:30mm) and
      +(left:30mm) .. ([yshift=4mm]P3.west) node [above right, pos=0.5] {$u$};

      \draw[->] ([yshift=4mm]V1.east) -- ([yshift=4mm]V3.west)
      node [above, midway] {$\reg{Z}_2$};

      \draw[->] (P3.east) -- (P4.west);

      \draw[->] (rho)--(P0) node [above, midway] {$\reg{X}$};

    \end{tikzpicture}
  \end{center}
  \caption{A cheating verifier $V'$ aims to extract knowledge from the
    encoding of a register $\reg{X}$.}
  \label{fig:encoding-attacker}
\end{mdframed}
\end{figure}

The next step is to eliminate the commitment.
Because it is assumed that the commitment scheme is quantum computationally
concealing, and the commitment is never revealed by the process described in
Figure~\ref{fig:zero-knowledge-simulated}, this process is computationally
indistinguishable from a similar process in which the commitment
$z$ is made to a \emph{fixed} choice of a tuple $(\pi_0,a_0,b_0)$, independent
of the prover's encoding key.
In particular, one may take $\pi_0$ to be the identity permutation and
$a_0$ and $b_0$ to be all-zero strings of length $2nN$.
One may now consider the commitment to this fixed tuple $(\pi_0,a_0,b_0)$,
together with the simulator $S_1$ and the cheating verifier action $V_2'$, to
form a single, efficiently implementable action $V'$ as suggested by
Figure~\ref{fig:V'}.

The interaction between this new action $V'$ and the prover's encoding, the
random string generator, and the predicate $Q$, as is illustrated in
Figure~\ref{fig:encoding-attacker}, may now be considered.
If it is proved that the channel implemented by this process can be efficiently
simulated, then it will follow that the channel implemented by the
process described in Figure~\ref{fig:zero-knowledge-simulated} can be
efficiently simulated (in a computationally indistinguishable sense).
This is so because the composition of the process illustrated in
Figure~\ref{fig:encoding-attacker} with the efficiently implementable simulator
$S_3$ is computationally indistinguishable from the process described in
Figure~\ref{fig:zero-knowledge-simulated}.

\subsubsection*{Step 4: simulating an attack on the encoding scheme}

It therefore suffices for us to prove that, for any efficiently
implementable action $V'$, the channel implemented by the process
described by Figure~\ref{fig:encoding-attacker} can be efficiently
simulated.  In fact, it will be possible to efficiently simulate this
channel with statistical accuracy, not just in a computationally
indistinguishable sense.  This is not surprising: we have claimed that
the computational zero-knowledge property of our proof system is based
on a computationally concealing commitment scheme, and the uses of the
commitment scheme have all been eliminated from consideration by the
steps above.

At this point we may describe the simulator directly: it is illustrated in
Figure~\ref{fig:encoding-attack-simulator}, and it represents the most
straightforward approach to obtaining a simulator.
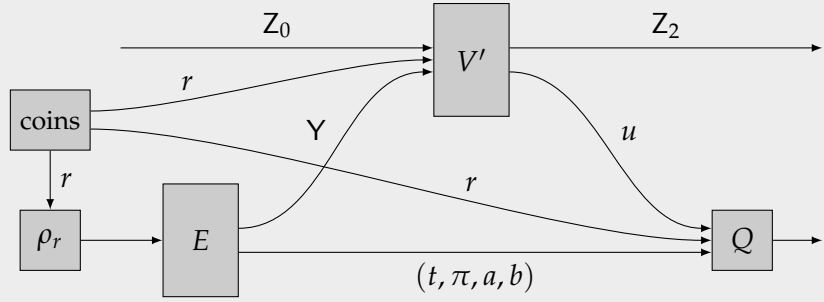
\begin{figure}[t]
  \begin{mdframed}[style=figstyle]
  \begin{center}
    \begin{tikzpicture}[scale=0.40,
        turn/.style={draw, minimum height=15mm, minimum width=10mm,
          fill = ChannelColor, text=ChannelTextColor},
        smallturn/.style={draw, minimum height=8mm, minimum width=8mm,
          fill = ChannelColor, text=ChannelTextColor},
        protocol/.style={draw, densely dotted, minimum height=43mm, minimum
          width=14mm},
        measure/.style={draw, minimum width=7mm, minimum height=7mm,
          fill = ChannelColor},
        >=latex]

      \node (V0) at (-16,3) {};
      \node (P0) at (-13,-3) [turn] {$E$};
      \node (V1) at (-4,3) [turn] {$V'$};
      \node (P3) at (5,-3) [smallturn] {$Q$};
      \node (V3) at (8,3) [minimum width = 1mm] {};
      \node (P4) at (8,-3) [minimum width = 1mm] {};
      \node (coins) at (-18,1) [smallturn] {\small coins};
      \node (rho) at (-18,-3) [smallturn] {$\rho_r$};

      \draw[->] ([yshift=3mm]coins.east) .. controls +(right:30mm) and
      +(left:30mm) .. (V1.west) node [above, pos=0.3] {$r$};
      \draw[->] ([yshift=-3mm]coins.east) .. controls +(right:50mm) and
      +(left:50mm) .. (P3.west) node [above, pos=0.6] {$r$};
      \draw[->] ([yshift=4mm]P0.east) .. controls +(right:30mm) and
      +(left:30mm) .. ([yshift=-4mm]V1.west) node [above left, pos=0.5]
      {$\reg{Y}$};
      \draw[->] ([yshift=4mm]V0.east) -- ([yshift=4mm]V1.west)
      node [above, midway] {$\reg{Z}_0$};
      \draw[->] ([yshift=-4mm]P0.east) -- ([yshift=-4mm]P3.west)
      node [below, midway] {$(t,\pi,a,b)$};
      \draw[->] ([yshift=-4mm]V1.east) .. controls +(right:30mm) and
      +(left:30mm) .. ([yshift=4mm]P3.west) node [above right, pos=0.5] {$u$};
      \draw[->] ([yshift=4mm]V1.east) -- ([yshift=4mm]V3.west)
      node [above, midway] {$\reg{Z}_2$};
      \draw[->] (P3.east) -- (P4.west);
      \draw[->] (rho) -- (P0);
      \draw[->] (coins) -- (rho) node [midway, right] {$r$};
    \end{tikzpicture}
  \end{center}
  \caption{The simulation of the process shown in
    Figure~\ref{fig:encoding-attacker} is nearly identical to that process,
    except that it uses the random string $r$ to encode a state $\rho_r$ that
    is guaranteed to pass the challenge corresponding to $r$, rather than
    encoding the witness state $\rho$.}
  \label{fig:encoding-attack-simulator}
\end{mdframed}
\end{figure}
This simulator differs from the process described in
Figure~\ref{fig:encoding-attacker} in that it uses the output of the random
string generator to choose a quantum state that, once encoded, passes the
randomly selected challenge with certainty.
It is trivial to efficiently prepare such a state given the string $r$.
It remains to prove that the channel implemented by the simulator described in
Figure~\ref{fig:encoding-attack-simulator} is indistinguishable from the
channel implemented by the process described in
Figure~\ref{fig:encoding-attacker}.
By convexity it suffices to prove that this is so for every fixed choice of the
string $r$.
\begin{figure}[!t]
  \begin{mdframed}[style=figstyle]
  \begin{center}
    \begin{tikzpicture}[scale=0.40,
        turn/.style={draw, minimum height=12mm, minimum width=10mm,
          fill = ChannelColor, text=ChannelTextColor},
        smallturn/.style={draw, minimum height=8mm, minimum width=8mm,
          fill = ChannelColor, text=ChannelTextColor},
        >=latex]

      \node (V0) at (-16,3) {};
      \node (P0) at (-13,-2) [turn] {$E$};
      \node (V1) at (-4,3) [turn] {$V'_r$};
      \node (P3) at (5,-2) [smallturn] {$Q_r$};
      \node (V3) at (8,3) [minimum width = 1mm] {};
      \node (P4) at (8,-2) [minimum width = 1mm] {};
      \node (rho) at (-17,-2) [smallturn] {$\xi$};

      \draw[->] ([yshift=4mm]P0.east) .. controls +(right:30mm) and
      +(left:30mm) .. ([yshift=-4mm]V1.west) node [above left, pos=0.5]
      {$\reg{Y}$};
      \draw[->] ([yshift=4mm]V0.east) -- ([yshift=4mm]V1.west)
      node [above, midway] {$\reg{Z}_0$};
      \draw[->] ([yshift=-4mm]P0.east) -- ([yshift=-4mm]P3.west)
      node [below, midway] {$(t,\pi,a,b)$};
      \draw[->] ([yshift=-4mm]V1.east) .. controls +(right:30mm) and
      +(left:30mm) .. ([yshift=4mm]P3.west) node [above right, pos=0.5] {$u$};
      \draw[->] ([yshift=4mm]V1.east) -- ([yshift=4mm]V3.west)
      node [above, midway] {$\reg{Z}_2$};
      \draw[->] (P3.east) -- (P4.west);
      \draw[->] (rho) -- (P0);
    \end{tikzpicture}
  \end{center}
  \caption{An arbitrary $n$-qubit state $\xi$ is encoded, and the cheating
    verifier $V'$ and predicate $Q$ for a fixed choice of a string $r$ interact
    as depicted.
    It will be proved that the channels obtained by substituting $\rho$
    and~$\rho_r$ for $\xi$ are approximately equal.}
  \label{fig:encoding-attack-xi}
\end{mdframed}
\end{figure}
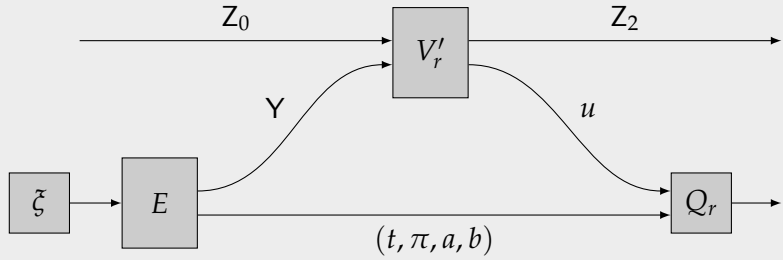

With this goal in mind, consider the process described in
Figure~\ref{fig:encoding-attack-xi}, in which an arbitrary state~$\xi$ is
encoded (corresponding either to $\rho$ or $\rho_r$ in
Figures~\ref{fig:encoding-attacker} and \ref{fig:encoding-attack-simulator}),
and the string $r$ is fixed (which has been indicated by the substitution of
$V'_r$ and $Q_r$ for $V'$ and $Q$, respectively).
We will prove that the channel implemented by any such process can have only a
limited dependence on the state~$\xi$.

More specifically, let us assume that $\xi_0$ and $\xi_1$ are arbitrary
$n$-qubit states, let $p_0$ and $p_1$ denote the probabilities with which
these two states would pass the challenge determined by $r$ (for an honest
prover and verifier pair), and let $\Psi_0$ and $\Psi_1$ denote the channels
from $\reg{Z}_0$ to $\reg{Z}_2$ together with the output bit of the predicate
$Q_r$ that are implemented by the process shown in
Figure~\ref{fig:encoding-attack-xi} when $\xi_0$ or $\xi_1$ is substituted for
$\xi$, respectively.

We claim that if the difference $\abs{p_0 - p_1}$ is negligible, then
the distance $\norm{\Psi_0-\Psi_1}_{\diamond}$ is also negligible.
The two steps that follow establish that this claim is true.
By the assumption that the prover initially holds a witness state $\rho$ that
satisfies every Hamiltonian term with probability exponentially close to 1,
this will complete the proof.

\subsubsection*{Step 5: twirling the cheating verifier}

To prove the fact suggested above regarding the channel implemented by
Figure~\ref{fig:encoding-attack-xi}, we will naturally need to make use of the
specific properties of the encoding scheme, which has not played an
important role in the analysis thus far.
The first step is to recognize that the effect of the prover's one time pad is
to \emph{twirl}\footnote{%
  The term \emph{twirl} is commonly used in quantum information theory to
  describe a process whereby a symmetrization over a collection of randomly
  chosen unitary operations has a particular effect on a state or channel.
  Twirled states and channels often take on a significantly simpler form
  than the original state or channel prior to twirling.
}
the verifier as Figure~\ref{fig:verifier-plus-otp} illustrates.
\begin{figure}[!t]
  \begin{mdframed}[style=figstyle]
  \begin{center}
    \begin{tikzpicture}[scale=0.40,
        bigturn/.style={draw, minimum height=18mm, minimum width=10mm,
          fill = ChannelColor, text=ChannelTextColor},
        smallturn/.style={draw, minimum height=8mm, minimum width=8mm,
          fill = ChannelColor, text=ChannelTextColor},
        control/.style={rounded corners = 1.5pt, draw, fill = Black,
          inner sep = 0pt, minimum size = 4pt},
        >=latex]

      \node (V) at (0,0) [bigturn] {$V'_r$};
      \node (Pauli1) at (-8,-1) [smallturn] {$X^c Z^d$};
      \node (Clifford1) at (-12,-1) [smallturn] {$C_r$};
      \node (Clifford2) at (-4,-1) [smallturn] {$C_r^{\ast}$};
      \node (Pauli2) at (4,-1) [smallturn] {$X^c$};
      \node (w) at (8,-1) [minimum width = 5mm] {$w$};
      \node (Y) at (-16,-1) [minimum width = 5mm] {$\reg{Y}$};
      \node (Z0) at (-16,1) [minimum width = 5mm] {$\reg{Z}_0$};
      \node (Z2) at (8,1) [minimum width = 5mm] {$\reg{Z}_2$};

      \draw[->] (Y) -- (Clifford1);
      \draw[->] (Clifford1) -- (Pauli1);
      \draw[->] (Pauli1) -- (Clifford2);
      \draw[->] (Clifford2) -- ([yshift=-1cm]V.west);
      \draw[->] ([yshift=-1cm]V.east) -- (Pauli2);
      \draw[->] (Pauli2) -- (w);
      \draw[->] (Z0) -- ([yshift=1cm]V.west);
      \draw[->] ([yshift=1cm]V.east) -- (Z2.west);

    \end{tikzpicture}\\[8mm]
    \begin{tikzpicture}[scale=0.40,
        turn/.style={draw, minimum height=16mm, minimum width=10mm,
          fill = ChannelColor, text=ChannelTextColor},
        smallturn/.style={draw, minimum height=8mm, minimum width=8mm,
          fill = ChannelColor, text=ChannelTextColor},
        control/.style={rounded corners = 1.5pt, draw, fill = Black,
          inner sep = 0pt, minimum size = 4pt},
        circle/.style={rounded corners = 3pt, draw,
          inner sep = 0pt, minimum size = 6pt},
        invisible/.style={minimum width=10mm},
        >=latex]

      \node (Z0) at (-8,1) [invisible] {$\reg{Z}_0$};
      \node (Y) at (-8,-2) [invisible] {$\reg{Y}$};
      \node (V) at (4,1) [smallturn] {$V''_r$};
      \node (Clifford) at (-4,-2) [smallturn] {$C_r$};
      \node (W) at (8,1) [invisible] {$\reg{Z}_2$};
      \node (Yout) at (8,-2) [invisible] {$w$};
      \node (target) at (4,-2) [circle] {};
      \node (M) at (0,-2) [smallturn] {};
      \node[draw, minimum width=5mm, minimum height=3.5mm, fill=ReadoutColor]
      (readout) at (M) {};

      \draw[thick] ($(M)+(0.3,-0.15)$) arc (0:180:3mm);
      \draw[thick] ($(M)+(0.2,0.2)$) -- ($(M)+(0,-0.2)$);
      \draw[fill] ($(M)+(0,-0.2)$) circle (0.5mm);
      \draw[->] (Clifford.east) -- (M.west);
      \draw[->] (Z0.east) -- (V.west);
      \draw[->] (Y.east) -- (Clifford.west);
      \draw[->] (V.east) -- (W.west);
      \draw[->] (M.east) -- (Yout);
      \draw (V) -- (target.south) node [right, pos = 0.45] {$v$};

    \end{tikzpicture}
  \end{center}
  \caption{The prover's one-time pad merged with the cheating verifier operation
    $V'_r$.
    Averaging over random choices of $c$ and $d$ results in a process
    that can alternatively be described as illustrated in the lower diagram.
    In this process, $V_r''$ represents a so-called \emph{quantum instrument},
    which transforms $\reg{Z}_0$ into $\reg{Z}_2$ and produces a classical
    measurement outcome.
    In this case, this classical measurement outcome is XORed onto the string
    produced by a standard basis measurement.
    (In this figure and the next, one should interpret $C_r$ and $C_r^{\ast}$
    as referring to the \emph{transversal} application of the corresponding
    Clifford operation.)}
  \label{fig:verifier-plus-otp}
\end{mdframed}
\end{figure}
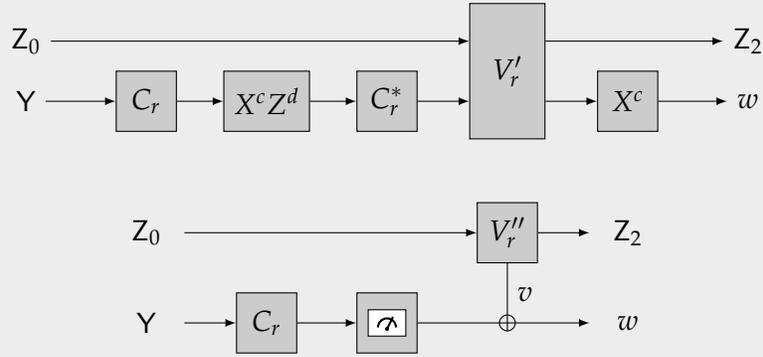

In greater detail, the last step of the encoding process is the quantum
one-time pad: the prover independently chooses one of the Pauli operations
$\I$, $X$, $Z$, or $XZ$ for each qubit of $\reg{Y}$ and applies that operation,
storing the randomly selected strings $a,b\in\Sigma^{2Nn}$.
With respect to the Clifford operation $C_r$ associated with the randomly
selected challenge (determined by the string $r$), the prover computes the pair
$(c,d)$ for which it holds that
\begin{equation}
  X^a Z^b = \bigl(C_r^{\otimes 2N}\bigr)^{\ast} X^c Z^d \bigl(C_r^{\otimes
    2N}\bigr).
\end{equation}
The first step when computing the predicate $Q_r$ is the application of $X^c$
to the string $u$, which is supposed to represent
the outcome of a standard basis measurement of a subset of the qubits after the
transversal application of $C_r$ to the corresponding qubits in the register
$\reg{Y}$.
The resulting string $w = u\oplus c$ is then fed into the predicate $R_r$
described previously.
Merging the Clifford operation $C_r^{\ast}$ with the cheating verifier
operation $V'_r$, then averaging over $c$ and $d$ chosen uniformly at random
(which is equivalent to averaging over $a$ and $b$ chosen uniformly at random),
one obtains a process of the form illustrated in the lower diagram in
Figure~\ref{fig:verifier-plus-otp}.

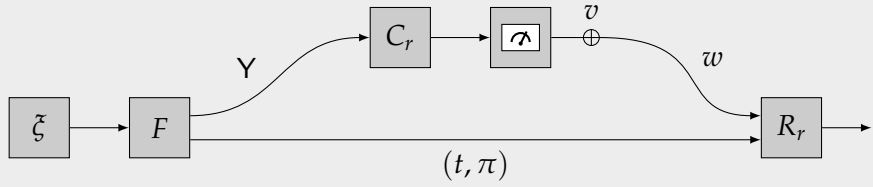
\begin{figure}[!t]
  \begin{mdframed}[style=figstyle]
  \begin{center}
    \begin{tikzpicture}[scale=0.40,
        smallturn/.style={draw, minimum height=8mm, minimum width=8mm,
          fill = ChannelColor, text=ChannelTextColor},
        measure/.style={draw, minimum width=7mm, minimum height=7mm,
          fill = ChannelColor},
        circle/.style={rounded corners = 3pt, draw,
          inner sep = 0pt, minimum size = 6pt},
        >=latex]

      \node (P0) at (-16,-3) [smallturn] {$F$};
      \node (P3) at (5,-3) [smallturn] {$R_r$};
      \node (P4) at (8,-3) [minimum width = 1mm] {};
      \node (rho) at (-20,-3) [smallturn] {$\xi$};
      \node (Clifford) at (-8,0) [smallturn] {$C_r$};
      \node (target) at (-1.65,0) [circle] {};
      \node (M) at (-4,0) [smallturn] {};
      \node[draw, minimum width=5mm, minimum height=3.5mm, fill=ReadoutColor]
      (readout) at (M) {};

      \draw[thick] ($(M)+(0.3,-0.15)$) arc (0:180:3mm);
      \draw[thick] ($(M)+(0.2,0.2)$) -- ($(M)+(0,-0.2)$);
      \draw[fill] ($(M)+(0,-0.2)$) circle (0.5mm);

      \draw[->] ([yshift=4mm]P0.east) .. controls +(right:30mm) and
      +(left:30mm) .. (Clifford.west) node [above left, pos=0.4] {$\reg{Y}$};

      \draw[->] ([yshift=-4mm]P0.east) -- ([yshift=-4mm]P3.west)
      node [below, midway] {$(t,\pi)$};

      \draw[->] (target.east) .. controls +(right:40mm) and
      +(left:30mm) .. ([yshift=4mm]P3.west) node [above right, pos=0.5] {$w$};

      \draw (M) -- (target.east);

      \draw[->] (P3.east) -- (P4.west);
      \draw[->] (rho) -- (P0);
      \draw[->] (Clifford) -- (M);

      \draw (target.north) -- (target.south);
      \node [above=0mm of target] {$v$};

    \end{tikzpicture}
  \end{center}
  \caption{An XOR attack against the prover's encoding scheme without the
    one-time pad. The transformation $F$ denotes the first three steps of the
    prover's encoding scheme.}
  \label{fig:XOR-encoding-attack}
\end{mdframed}
\end{figure}

By the observation we have just made, it suffices to consider processes of the
form described in Figure~\ref{fig:XOR-encoding-attack}, in which an $n$-qubit
state $\xi$ is encoded as described by the first three steps in the prover's
encoding procedure (but not including the one-time pad), the Clifford
operation $C_r$ (for a fixed choice of $r$) is applied transversally to the
resulting register, and the qubits on which those transversal Clifford
operations act are measured with respect to the standard basis.
For some arbitrary but fixed string $v$, the XOR of the outcome of this
measurement with $v$ is fed into the predicate $R_r$.
The process outputs a single bit, obtained by evaluating the predicate $R_r$.

\subsubsection*{Step 6: encoding security under XOR attacks}

Now let us return to the claim made previously, in which $\xi_0$ and $\xi_1$
represent $n$-qubit states, $p_0$ and $p_1$ denote the probabilities with
which these two states would pass the challenge determined by $r$ (for an
honest prover and verifier pair), and $\Psi_0$ and $\Psi_1$ denote the
channels implemented by the process shown in
Figure~\ref{fig:encoding-attack-xi} when $\xi_0$ or $\xi_1$ is substituted
for $\xi$, respectively.
If it is the case that the distribution of output bits obtained by
substituting $\xi_0$ and~$\xi_1$ for $\xi$ in
Figure~\ref{fig:XOR-encoding-attack} have negligible statistical difference,
then it follows that the difference $\norm{\Psi_0-\Psi_1}_{\diamond}$ is
also negligible.
It therefore remains to argue that the distributions obtained by
substituting $\xi_0$ and~$\xi_1$ into Figure~\ref{fig:XOR-encoding-attack}
have negligible statistical difference.

Before finishing off the last step of the analysis, it is helpful to consider
the possible outcomes of the measurement, the definition of $R_r$, and the
behavior of the procedure described in Figure~\ref{fig:XOR-encoding-attack}
when $v = 0\cdots 0$ is the all-zero string.
For any choice of $\xi$, the measurement is guaranteed to yield a string of
length $2kN$ taking the form $u_{i_1}\cdots u_{i_k}$, where
$u_{i_1},\ldots,u_{i_k}\in\{0,1\}^{2N}$ and $(i_1,\ldots,i_k)$ index the qubits
on which $C_r$ acts nontrivially.
With respect to a particular choice of $(t,\pi)$, if we define strings
$y_i,z_i\in\{0,1\}^N$, for each $i\in \{i_1,\ldots,i_k\}$, so that
\begin{equation}
  \pi(y_i z_i) = u_i,
\end{equation}
then these two conditions will necessarily be met:
\begin{mylist}{\parindent}
\item[1.]
  $y_i\in\D_N$ for every $i\in\{i_1,\ldots,i_k\}$, and
\item[2.]
  $\bigl\langle z_{i_1} \cdots z_{i_k} \,\big|\,
  C_r^{\otimes N} \,\big|\, t_{i_1} \cdots t_{i_k} \bigr\rangle \not= 0$.
\end{mylist}
Moreover, in the case that $r$ determines a Hamiltonian term challenge, the
event that $y_i\in\D_N^1$ for at least one index $i\in\{i_1,\ldots,i_k\}$ is
equivalent to $\xi$ passing this challenge.
Thus, in the case that $v = 0\cdots 0$, the process described in
Figure~\ref{fig:XOR-encoding-attack} outputs the bit 1 with precisely the
probability that an honest prover and verifier pair would result in acceptance,
assuming the prover's initial state is $\xi$ and $r$ is selected as a random
string determining the challenge.

Now let us assume that $v$ is a nonzero string, and let us consider two cases:
the first is that the Hamming weight $\abs{v}_1$ of $v$ satisfies
$\abs{v}_1 < K$, for $K$ being the minimum Hamming weight of a nonzero
codeword in $\D_N$, and the second case is that $\abs{v}_1 \geq K$.

If it is the case that $\abs{v}_1 < K$, then there are two possible ways that
the value of the predicate $R_r$ could change, in comparison to the case
$v = 0\cdots 0$.
In both cases, if there is a change, it must be from 1 to 0, caused by one of
the two conditions above becoming violated.
The first case is that one or more bits in one of the codewords
$y_{i_1},\ldots,y_{i_k}$ is flipped, causing the first condition listed
above to become violated.
The second case is that a measurement outcome for the trap qubits is obtained
that potentially violates the second condition.
Note that it is not possible that the first condition remains satisfied, but
the Hamiltonian term challenge condition that $y_i\in\D_N^1$ for at least one
index $i\in\{i_1,\ldots,i_k\}$ changes, as such a change would require at
least~$K$ bit-flips to cause a logical change in valid codewords.
It is unimportant for the purposes of the analysis to determine the probability
with which one of the two conditions becomes violated, except to observe that
it is independent of $\xi$.
(In somewhat more detail, the string $v$ may be written as
$v = v_{i_1}\cdots v_{i_k}$, and the probability that neither of the two
conditions is affected is given by the probability that $\pi^{-1}(v_{i})$
places no 1s within the first $N$ bits or over a trap qubit left in a standard
basis state within the second $N$ bits, for a random choice of $\pi$ and for
each $i\in\{i_1,\ldots,i_k\}$.)

If it is the case that $\abs{v}_1 \geq K$, then there is a possibility that,
in comparison to the functioning of the process for $v = 0\cdots 0$, the
Hamiltonian term challenge condition that $y_i\in\D_N^1$ for at least one
index $i\in\{i_1,\ldots,i_k\}$ could be affected.
That is, $v$ has enough Hamming weight to affect the logical values represented
by the codewords $y_{i_1},\ldots,y_{i_k}$.
However, as we will show, the assumption that $\abs{v}_1 \geq K$ necessarily
leads to a negligible probability that the second condition remains
satisfied---for a string $v$ having Hamming weight $K$ or higher, the
probability that none of the traps is sprung is exponentially small.
In order to argue that this is so, we require the following simple lemma.

\begin{lemma}
  Let $k$ be a positive integer, let $C$ be a Clifford operation on $k$
  qubits, and let $j\in\{1,\ldots,k\}$.
  There exists a string $t\in\{0,+,\circlearrowright\}^k$, a bit
  $a\in\{0,1\}$, and pure states $\ket{\phi_0}$ and $\ket{\phi_1}$ on
  $j-1$ qubits and $k - j$ qubits, respectively, so that
  \begin{equation}
    C \ket{t} = \ket{\phi_0} \ket{a} \ket{\phi_1}.
  \end{equation}
  Equivalently, there is a choice of $t$ so that the $j$-th qubit of
  $C\ket{t}$ is left in a standard basis state.
\end{lemma}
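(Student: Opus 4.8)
The plan is to argue entirely within the stabilizer formalism. First I would record the elementary fact that $\ket{0}$, $\ket{+}$, and $\ket{\circlearrowright}$ are eigenvectors, with eigenvalues $+1$, $+1$, and $-1$ respectively, of the single-qubit Pauli operators $Z$, $X$, and $Y$. Consequently, for any string $t\in\{0,+,\circlearrowright\}^k$ the product state $\ket{t}$ is a stabilizer state: it is the unique common $+1$-eigenvector of the $k$ pairwise-commuting single-qubit Pauli operators $g_1,\ldots,g_k$, where $g_i$ acts on the $i$-th qubit as $Z$, $X$, or $-Y$ according to whether $t_i$ is $0$, $+$, or $\circlearrowright$ (and trivially on the other qubits). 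The stabilizer group of $\ket{t}$ is therefore $\bigl\{\prod_{i\in S} g_i : S\subseteq\{1,\ldots,k\}\bigr\}$, and for any Clifford operation $C$ the state $C\ket{t}$ is stabilized by the conjugated group $\bigl\{\prod_{i\in S} C g_i C^{\ast} : S\subseteq\{1,\ldots,k\}\bigr\}$.

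Next I would observe that the $j$-th qubit of $C\ket{t}$ is in a standard basis state precisely when $C\ket{t}$ is an eigenvector of the single-qubit operator $Z_j$ (meaning $Z$ on qubit $j$ and identity elsewhere), which in turn holds precisely when $\ket{t}$ is stabilized by $+C^{\ast}Z_j C$ or by $-C^{\ast}Z_j C$. So it suffices to choose $t$ so that the stabilizer group of $\ket{t}$ contains one of these two operators. Here is the key step: since $C$ is Clifford, $P := C^{\ast}Z_j C$ is a Pauli operator, and since $Z_j$ is Hermitian, squares to $\I$, and is not a scalar, the same is true of $P$; hence $P = (-1)^{b}\, P_1\otimes\cdots\otimes P_k$ for some bit $b$ and single-qubit Paulis $P_1,\ldots,P_k\in\{\I,X,Y,Z\}$, not all equal to $\I$. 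I would then simply set $t_i = +$ when $P_i = X$, $t_i = \circlearrowright$ when $P_i = Y$, and $t_i = 0$ when $P_i\in\{\I,Z\}$. Writing $S = \{i : P_i\neq\I\}$, this makes $g_i = \pm P_i$ for each $i\in S$, so that $\prod_{i\in S} g_i = (-1)^{b'} P$ for a sign bit $b'$ obtained by collecting the global sign of $P$ together with one factor of $-1$ for each $Y$-type component among the $P_i$. Thus $(-1)^{b'}P$ lies in the stabilizer group of $\ket{t}$, hence $(-1)^{b'}Z_j$ stabilizes $C\ket{t}$, and therefore the $j$-th qubit of $C\ket{t}$ is left in the standard basis state $\ket{a}$ with $a = b'$; factoring this qubit out gives precisely the asserted decomposition $C\ket{t} = \ket{\phi_0}\ket{a}\ket{\phi_1}$.

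I expect there to be essentially no obstacle — the statement is ``simple'' as advertised. The only points needing care are the sign bookkeeping (one must track the global phase of $P$ together with the extra signs from the $Y$-type components in order to read off the correct value of $a$), and the treatment of the qubits $i$ with $P_i = \I$, for which $t_i$ is unconstrained (so the string $t$ produced is far from unique); these qubits are simply excluded from the product $\prod_{i\in S} g_i$. Conceptually, the entire argument is the remark that asking for one output qubit of $C\ket{t}$ to be classical amounts to asking whether the fixed Pauli $C^{\ast}Z_j C$ can be realized as a product of single-qubit stabilizer generators, and it always can, because we are free to choose the Pauli type of the generator on each qubit independently.
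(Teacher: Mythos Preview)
Your proposal is correct and follows essentially the same approach as the paper: reduce the claim to the requirement that $\ket{t}$ be an eigenvector of $C^{\ast}Z_jC$, observe that this operator is (up to phase) a tensor product of single-qubit Paulis because the Clifford group normalizes the Pauli group, and then choose each $t_i\in\{0,+,\circlearrowright\}$ to be an eigenvector of the corresponding tensor factor. The paper's proof is considerably terser---it does not invoke the stabilizer formalism by name and omits the sign bookkeeping you spell out---but the underlying idea is identical.
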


\begin{proof}
  The lemma is equivalent to the existence of a string $t$ so that
  $\ket{t}$ is an eigenvector of the operator
  \begin{equation}
    \label{eq:Clifford-conjugated-Pauli}
    C^{\ast}\bigl(\I^{\otimes(j-1)}\otimes Z\otimes\I^{\otimes(k-j)}\bigr)C.
  \end{equation}
  As the Clifford group normalizes the Pauli group, the operator
  \eqref{eq:Clifford-conjugated-Pauli} is a scalar multiple of a tensor product
  of Pauli operators and identity operators.
  The lemma follows from the observation that $t$ may be chosen so that each
  $\ket{t_1},\ldots,\ket{t_k}$ is an eigenvector of the Pauli operator in the
  corresponding position.
\end{proof}

By this lemma, one finds that for a random choice of
$t\in\{0,+,\circlearrowright\}^{kN}$, and for any $k$-qubit Clifford operation
$C$ applied transversally to $\ket{t}$, each qubit is left in a standard basis
state with probability at least~$3^{-k}$, and for any choice of $N$ or fewer
qubits acted on by distinct Clifford operations these events are independent.
In greater detail, if the qubits
\begin{equation}
  \bigl(\reg{Z}^{1}_1,\ldots,\reg{Z}^{k}_{1}\bigr),
  \ldots,
  \bigl(\reg{Z}^{1}_{N},\ldots,\reg{Z}^{k}_{N}\bigr)
\end{equation}
are initialized to the state $\ket{t}$, for
$t\in\{0,+,\circlearrowright\}^{kN}$ chosen uniformly at random, and
the $k$-qubit Clifford operation $C$ is applied independently to each $k$-tuple
of qubits, then each qubit is left in a standard basis state with probability
at least~$3^{-k}$, and the states of the $k$-tuples of qubits are independent.

Now we return to the analysis for a string $v$ of length $2kN$ having Hamming
weight at least~$K$.
By virtue of the fact just mentioned, it is straightforward to
obtain a negligible upper-bound on the probability for the process described in
Figure~\ref{fig:XOR-encoding-attack} to output 1.
As this event requires that a random choice of the permutation $\pi$ leaves
none of the 1-bits of $v$ in positions corresponding to trap qubits left in
standard basis states by the transversal action of $C_r$, we find that the
probability to output 1 is exponentially small in $K$.
In particular, this probability is at most
\begin{equation}
  \biggl(1 - \frac{1}{3^{k+1}}\biggr)^{K/k}
  = \exp(-\varepsilon(k) K)
\end{equation}
where $\varepsilon(k)$ denotes a positive real number depending on $k$ but not
$K$.

From a consideration of the two cases just presented, we may conclude the
following.
Suppose as before that $\xi_0$ and $\xi_1$ are $n$-qubit states that may be
substituted for $\xi$ in Figure~\ref{fig:XOR-encoding-attack}, and that
the probabilities $p_0$ and $p_1$ for these states to pass the challenge
determined by a fixed choice of $r$ have negligible difference.
Let us write $q_0(v)$ and $q_1(v)$, respectively, to denote the probability
that the process described in Figure~\ref{fig:XOR-encoding-attack} outputs 1.
As noted before, it holds that $p_0 = q_0(0\cdots 0)$ and
$p_1 = q_1(0\cdots 0)$.
For any choice of $v$ satisfying $\abs{v}_1 < K$, we have that
$q_0(v) = \beta(v) q_0(0\cdots 0)$ and
$q_1(v) = \beta(v) q_1(0\cdots 0)$ for $\beta(v)\in (0,1)$ that is independent
of $\xi_0$ and $\xi_1$.
Finally, for any choice of~$v$ satisfying $\abs{v}_1 \geq K$, we have that
$q_0(v)$ and $q_1(v)$ are both negligible.
It therefore follows that the difference $\abs{q_0(v)-q_1(v)}$ is negligible
in all cases, which completes the proof.

\section{Conclusion}
\label{sec:con}

This paper gives a zero-knowledge proof system for any problem in \class{QMA}
assuming the existence of a quantum computationally concealing and
unconditionally binding commitment scheme.
Such a commitment scheme can be obtained assuming quantum-secure one-way
permutations~\cite{AC02} (or injections more generally) or a quantum-secure
pseudo-random generator~\cite{Nao91} that could potentially be based on one-way
functions that are hard to invert for any quantum polynomial time
algorithm~\cite{HILL99,Zha12,Son14}.
We conclude with a few open questions and directions for future work.

\begin{mylist}{\parindent}
\item[1.]
  Our proof system inherits the soundness error of the most straightforward
  verification procedure for the local Clifford-Hamiltonian problem, which
  is to randomly select a Hamiltonian term and perform a measurement
  corresponding to it.
  When an arbitrary \class{QMA} problem is reduced to the local Hamiltonian
  problem, the resulting soundness error may potentially be large
  (polynomially bounded away from 1).
  Can one obtain a zero-knowledge proof system for any \class{QMA} problem
  with small soundness error while maintaining the other features of our
  proof system (e.g., constant round of communications)?

  We note that if a prover has polynomially many copies of a valid quantum
  witness, then a parallel repetition of our proof system may yield a
  constant round zero-knowledge proof system having small soundness error
  for any \class{QMA} problem---but this would require a parallel repetition
  result concerning zero-knowledge proof systems for \class{NP} secure
  against quantum attacks.
  Analogous results for zero-knowledge proofs for \class{NP} against
  classical attacks are known~\cite{GK96,FS89}, but they involve
  sophisticated rewinding arguments for which known quantum rewinding
  techniques do not seem to be applicable.

\item[2.]
  Are there natural formalizations of \emph{proofs of quantum knowledge}?
  Roughly speaking, one would expect such a notion to require that
  whenever a prover is able to prove the validity of a statement, one could
  construct a knowledge extractor that can extract a quantum witness given
  access to such a prover.
  It seems plausible that our proof system could be adapted to such a notion,
  although we have not investigated this notion in depth.

\item[3.]
  We have considered an encoding scheme for quantum states that ensures the
  secrecy of the state and allows for the transversal application of
  constant-size Clifford operations and measurement in the computational
  basis.
  It is an interesting open question to extend our encoding scheme, or to
  design a new one, so that it can support transversally applying a larger
  family of quantum operations.

\item[4.]
  Finally, we make one further remark on an abstract view of our proof system.
  Classically speaking, one can imagine a ``commit-and-open''
  primitive where a sender commits to a message $m$, and later opens
  sufficient information so that a receiver can test a property
  $\mathcal{P}(\cdot)$ on $m$, and nothing more.
  For example, $\mathcal{P}$ can be an \class{NP}-relation $R(x,\cdot)$ that
  checks if message $m$ is a valid witness.
  This can be implemented easily by a standard commitment scheme and during the
  opening phase, the sender and receiver run a zero-knowledge proof of
  $R(x,m) = 1$ instead of the standard opening.
  Our proof system, which combines a commitment scheme and classical
  zero-knowledge proofs for \class{NP}, can be viewed as a quantum analogue.
  Namely, we commit to a witness state and open just enough information to
  verify that some reduced density of the witness state falls into a specific
  subspace.
  We can only deal with properties of a very special form, and it is an
  interesting direction for future work to generalize and find applications of
  this sort of primitive.

\end{mylist}

\subsection*{Acknowledgments}

We thank Michael Beverland, Sevag Gharibian, David Gosset, Yi-Kai Liu and
Bei Zeng for helpful conversations.
A.{\,}B. and J.{\,}W. are supported in part by Canada's NSERC.
F.{\,}S. is supported in part by Cryptoworks21, Canada's NSERC and CIFAR.

\pagebreak

\appendix

\section{Preliminaries}
\label{sec:prelim}

This section summarizes some of the notation, definitions, and known
facts concerning quantum information and computation, cryptography,
and other topics that are used throughout the paper.  We refer
to~\cite{NC00,KSV02,Wat09a} for further details on the theory of
quantum information and computation.  Further information on classical
zero-knowledge and cryptography can be found in~\cite{Gol01,Gol04}.

\subsection{Basic terminology}

Throughout the paper we let $\Sigma = \{0,1\}$ denote the binary alphabet, and
only consider strings, promise problems, and complexity classes over this
alphabet.
For a string $x\in \Sigma^*$, $\abs{x}$ denotes its length.
A function $g:\mathbb{N}\rightarrow \mathbb{N}$ is a
\emph{polynomially bounded function} if there exists a deterministic
polynomial-time Turing machine $M_g$ that outputs $1^{g(n)}$ on
input~$1^n$ for every non-negative integer~$n$.
A function $f: \mathbb{N}\rightarrow [0,\infty)$ is said to be
\emph{negligible} if, for every polynomially bounded function~$g$, it holds
that $f(n) < 1/g(n)$ for all but finitely many values of~$n$.

\subsection{Quantum information basics}

When we refer to a \emph{quantum register} in this paper, we simply mean a
collection of qubits that we wish to view as a single unit and to which we give
some name.
Names of registers will always be uppercase letters in a \emph{sans serif}
font, such as $\reg{X}$, $\reg{Y}$, and $\reg{Z}$.
The finite dimensional complex Hilbert spaces associated with registers will be
denoted by capital script letters such as $\X$, $\Y$, and $\Z$, using the
same letter in the two different fonts to denote a quantum register and its
corresponding space for convenience.
Dirac notation is used to express vectors in Hilbert spaces and linear mappings
between them in a standard way.

For a given space $\X$, we let $\Lin(\X)$ denote the set of all linear
mappings (or \emph{operators}) from $\X$ to itself.
The identity element of $\Lin(\X)$ is denoted $\I_{\X}$, or just as $\I$
when $\X$ can be taken as implicit.
The inner product between operators $A$ and $B$ is defined as $\ip{A}{B} =
\tr(A^{\ast} B)$.

\emph{Quantum states} are represented by density operators, which are positive
semidefinite operators having unit trace.
A linear map $\Phi:\Lin(\X)\rightarrow\Lin(\Y)$ is said to be a
\emph{channel} if it is both completely positive and trace-preserving.
Channels are mappings from density operators to density operators
that, in principle, represent physically realizable operations.
A \emph{measurement} is described by a collections of positive semidefinite
operators $\{M_j\}$ such that $\sum_j M_j = \I$, with the probability
that the measurement on state $\rho$ results in outcome $j$ being given by
$\ip{M_j}{\rho}$

We review a few definitions of norms on operators, which are used to discuss
the distinguishability of quantum states and channels.
The \emph{trace norm} of an operator $X \in \Lin(\X)$ is defined as
$\norm{X}_1 = \tr \sqrt{X^{\ast} X}$.
For any linear map $\Phi: \Lin(\X) \rightarrow \Lin(\Y)$, the \emph{diamond
  norm} (or completely bounded trace norm)~\cite{Kit97,KSV02,AKN98} is
defined as
\begin{equation*}
  \norm{\Phi}_{\diamond} = \max\left\{ \norm{(\Phi\otimes
    \I_{\Lin(\W)})(X)}_1\,:\, X\in\Lin(\X\otimes\W)\:,\:\norm{X}_1
  \leq 1\right\},
\end{equation*}
where $\W$ is any space with dimension equal to that of $\X$.
(The value remains the same for any choice of $\W$, provided its dimension
is at least that of $\X$.)

\subsubsection*{Quantum gates and circuits}

A \emph{quantum circuit} is an acyclic network of quantum gates connected by
wires.
The quantum gates represent quantum channels while the wires represent qubits
on which the channels act.

We will refer to two types of quantum circuits in this paper:
\emph{unitary} quantum circuits and \emph{general} quantum circuits.
By unitary quantum circuits we mean circuits composed of unitary gates (such as
the ones described below) chosen from some finite gate set.
General quantum circuits are composed of gates that may correspond to channels
that are not necessarily unitary.
It is sufficient for the purposes of this paper that we consider just two simple
non-unitary gates:
\emph{ancillary gates}, which input nothing and output a qubit in the $\ket{0}$
state; and \emph{erasure gates}, which input one qubit and output nothing
(and correspond to the channel described by the trace mapping).
As is described elsewhere~\cite{AKN98, Wat11}, arbitrary channels
mapping one register to another can always be approximated arbitrarily closely
by quantum circuits whose gates include a universal collection of unitary gates
together with ancillary and erasure gates.
The \emph{size} of a quantum circuit is the number of gates in the circuit plus
the number of qubits on which it acts.

We will refer to the following well-known single-qubit unitary gates:
\begin{mylist}{\parindent}
\item[1.]
  \emph{Pauli gates:}
  \begin{equation}
    X: \ket{a} \mapsto \ket{1-a}
    \qquad\text{and}\qquad
    Z: \ket{a} \mapsto (-1)^a\ket{a},
  \end{equation}
  for each $a\in\{0,1\}$, as well as $Y = iXZ$.
\item[2.]
  \emph{Hadamard gate:}
  \begin{equation}
    H: \ket{a} \mapsto \frac{1}{\sqrt{2}}\ket{0}+\frac{(-1)^a}{\sqrt{2}}\ket{1},
  \end{equation}
  for each $a\in\{0,1\}$.
\item[3.]
  \emph{Phase gate:}
  \begin{equation}
    P: \ket{a} \mapsto i^{a}\ket{a},
  \end{equation}
  for each $a\in\{0,1\}$.
\end{mylist}
In addition, for any $k$-qubit unitary quantum gate $U$ we define the
\emph{controlled-$U$} gate as
\begin{equation}
  \Lambda(U) : \ket{a}\ket{x} \mapsto \ket{a} U^{a}\ket{x},
\end{equation}
for each $a\in\{0,1\}$ and $x\in\{0,1\}^k$.

The $k$-qubit \emph{Pauli group} is the group containing all unitary operators
of the form
\begin{equation}
  \alpha U_1\otimes \cdots \otimes U_k
\end{equation}
where $\alpha \in \{1,i,-1,-i\}$ and $U_1,\ldots,U_k\in\{\I,X,Y,Z\}$, where
$\I$ denotes the single-qubit identity operation.
Elements of this group are also referred to as \emph{Pauli operations}.
If $a,b\in\{0,1\}^k$ are binary strings of length $k$, then we write
\begin{equation}
  X^a = X^{a_1} \otimes \cdots \otimes X^{a_k}
  \quad\text{and}\quad
  Z^b = Z^{b_1} \otimes \cdots \otimes Z^{b_k}
\end{equation}
to denote the Pauli operations obtained from these strings as indicated.

Channels that can be expressed as convex combinations of unitary channels that
correspond to Pauli operations are called \emph{Pauli channels}.
An example of Pauli channels that is relevant to this paper is
the \emph{completely depolarizing} channel
\begin{equation}
  \Omega(\rho) = \frac{1}{4}\sum_{a,b\in\{0,1\}}
  \bigl(X^a Z^b\bigr) \rho \bigl(X^a Z^b\bigr)^{\ast}
  = \frac{\I}{2},
\end{equation}
for any single-qubit density operator $\rho$.
We thus see that the effect of $\Omega$ is to completely randomize the state
of a single-qubit system.
By treating a random choice of a pair $(a,b)$ as a secret key, we obtain a
quantum generalization of the one-time pad, known as the
\emph{quantum one-time pad}~\cite{AMTW00}.
When the channel is performed independently on~$k$ qubits, the effect is
given by
\begin{equation}
  \Omega^{\otimes k}(\rho) = 2^{-k}\,\I\otimes\cdots\otimes\I
\end{equation}
for every $k$-qubit density operator $\rho$.
The quantum one-time pad generalizes naturally to any choice of the number $k$.

Sometimes it will be convenient to consider quantum circuits that
implement measurements.
When we refer to a \emph{measurement circuit}, we mean any general quantum
circuit, followed by a measurement of all of its output qubits with respect to
the standard basis.
If $Q$ is a measurement circuit that is applied to a collection of
qubits in the state $\rho$, then $Q(\rho)$ is interpreted as a string-valued
random variable describing the resulting measurement.
We will only need to refer to measurement circuits outputting a single bit
in this paper.

A $k$-qubit \emph{Clifford circuit} is any unitary quantum circuit on $k$ qubits
whose gates are drawn from the set $\{H,P,\Lambda(X)\}$ containing Hadamard,
phase, and controlled-not gates.
(It is common that one also allows Pauli gates to be included in this set for
convenience.
Given that $X = HPPH$ and $Z = PP$, there is no generality lost in using the
smaller gate set in the definition.)
The set of all unitary operators that can be described by $k$-qubit Clifford
circuits forms a finite group known as the \emph{Clifford group}.
Up to scalar multiples, the $k$-qubit Clifford group is the normalizer of the
$k$-qubit Pauli group: if $U$ is a $k$-qubit unitary operator for which it
holds that $U V U^{\ast}$ is an element of the $k$-qubit Pauli group for every
$k$-qubit Pauli group element $V$, then $U = \alpha C$ for $\alpha\in\complex$
satisfying $\abs{\alpha} = 1$ and $C$ being a $k$-qubit Clifford group
element.
Given the description of a $k$-qubit Pauli group element $V$ and a $k$-qubit
Clifford circuit $C$, one can efficiently compute a description of the
$k$-qubit Pauli group element $C V C^{\ast}$~\cite{Got98}.

Clifford circuits are not universal for quantum computation.
Two examples (among other known examples) of universal gate sets are the
following:
\begin{mylist}{\parindent}
\item[1.]
  Hadamard, phase, and Toffoli gates:
  $\{H,P,\Lambda(\Lambda(X))\}$.
\item[2.]
  Hadamard and controlled-phase gates:
  $\{H,\Lambda(P)\}$.
\end{mylist}
The first of these choices is sometimes easier to work with, but we will make
use of the fact that the second gate set is universal in the paper.

\subsection{Polynomial-time generated families of quantum circuits and QMA}

Any quantum circuit with gates drawn from a fixed, finite gate set can be
encoded as a binary string, with respect to a variety of possible encoding
schemes.
The specific details of such encoding schemes are not important within the
context of this paper, so we will leave it to the reader to imagine that a
sensible and efficient encoding scheme for quantum circuits has been selected,
relative to whatever gate set is under consideration.
It should be assumed, of course, that a circuit's size and its encoding length
are polynomially related.

For any infinite set of binary strings $S\subseteq\{0,1\}^{\ast}$, a collection
$\{V_x\,:\,x\in S\}$ of quantum circuits is said to be
\emph{polynomial-time generated} if there exists a deterministic polynomial-time
Turing machine that, on input $x\in S$, outputs an encoding of $V_x$.
The assumptions on encoding schemes suggested above imply that, if
$\{V_x\,:\,x\in S\}$ is a polynomial-time generated collection, then $V_x$ must
have size polynomial in~$|x|$.

Next we will define the complexity class \class{QMA}, which is commonly viewed
as the most natural quantum generalization of \class{NP}.

\begin{definition}
  A promise problem $A = (A_{\yes},A_{\no})$ is contained in the complexity
  class $\class{QMA}_{\alpha,\beta}$ if there exists a polynomial-time generated
  collection
  \begin{equation}
    \bigl\{V_x\,:\,x\in A_{\yes} \cup A_{\no}\bigr\}
  \end{equation}
  of quantum circuits and a polynomially bounded function $p$ possessing the
  following properties:
  \begin{mylist}{\parindent}
  \item[1.]
    For every string $x\in A_{\yes}\cup A_{\no}$, one has that $V_x$ is a
    measurement circuit taking $p(\abs{x})$ input qubits and outputting
    a single bit.
  \item[2.]
    \emph{Completeness}.
    For all $x\in A_{\yes}$, there exists a $p(\abs{x})$-qubit state $\rho$
    such that $\Pr(V_x(\rho) = 1) \geq \alpha$.
  \item[3.]
    \emph{Soundness}.
    For all $x\in A_{\no}$, and every $p(\abs{x})$-qubit state
    $\rho$, it holds that $\Pr(V_x(\rho) = 1) \leq \beta$.
  \end{mylist}
\end{definition}

\noindent
In this definition, $\alpha,\beta\in[0,1]$ may be constant values or functions
of the length of the input string~$x$.
When they are omitted, it is to be assumed that they are $\alpha = 2/3$ and
$\beta = 1/3$.
Known error reduction methods~\cite{KSV02,MW05} imply that a wide
range of selections of $\alpha$ and $\beta$ give rise to the same complexity
class.
In particular, \class{QMA} coincides with $\class{QMA}_{\alpha,\beta}$ for
$\alpha = 1 - 2^{-q(\abs{x})}$ and $\beta = 2^{-q(\abs{x})}$, for any
polynomially bounded function $q$.

\subsection{Quantum computational indistinguishability and zero-knowledge}

Next we review notions of quantum state and channel discrimination, as
well as zero-knowledge in a quantum setting (as defined in~\cite{Wat09}).

We first specify what it means for two collections of quantum states to be
quantum computationally indistinguishable.
The definition that follows may be viewed as being a non-uniform notion of
quantum computational indistinguishability, as it places no uniformity
conditions on quantum circuits and allows for an \emph{auxiliary} quantum
state $\sigma$ to assist in the task of state discrimination.

\begin{definition}[Quantum computationally indistinguishable states]
  \label{def_ind_states}
  Suppose that $S\subseteq\{0,1\}^{\ast}$ is an infinite set of binary strings,
  $r$ is a polynomially bounded function, and $\rho_x$ and $\xi_x$ are states on
  $r(\abs{x})$ qubits for each $x\in S$.
  The collections $\{\rho_x\,:\,x\in S\}$ and $\{\xi_x\,:\,x\in S\}$ are
  \emph{quantum computationally indistinguishable} if, for every choice of
  polynomially bounded functions $s$ and $k$, any measurement circuit $Q$ of
  size $s(\abs{x})$, and any choice of a $k(\abs{x})$-qubit state $\sigma$, it
  holds that
  \begin{equation}
    \abs{\Pr[Q(\rho_x \otimes\sigma) = 1] -
      \Pr[Q(\xi_x \otimes\sigma) = 1]} \leq \varepsilon(\abs{x})
  \end{equation}
  for all $x\in S$, for a negligible function $\varepsilon$.
\end{definition}

The notion extends naturally to distinguishing collections of channels, as the
following definition makes precise.

\begin{definition}[Quantum computationally indistinguishable channels]
  Suppose that $S\subseteq\{0,1\}^{\ast}$ is an infinite set of binary strings,
  $q$ and $r$ are polynomially bounded functions, and $\Phi_x$ and
  $\Psi_x$ are channels from $q(\abs{x})$ qubits to $r(\abs{x})$ qubits for
  each $x\in S$.
  The collections $\{\Phi_x\,:\, x \in S\}$ and
  $\{\Psi_x\,:\,x \in S\}$
  are \emph{quantum computationally indistinguishable} if, for every choice of
  polynomially bounded functions $s$ and $k$, every state $\sigma$ on
  $q(\abs{x})+k(\abs{x})$ qubits, and every measurement circuit $Q$ on
  $r(\abs{x})+k(\abs{x})$ qubits having size $s(\abs{x})$, one has that
  \begin{equation}
  \abs{\Pr[ Q( (\Phi_x \otimes \I)(\sigma)) = 1] -\Pr[ Q((\Psi_x
    \otimes \I)(\sigma)) = 1] } \leq \varepsilon(\abs{x})
  \end{equation}
  for every $x\in S$, for a negligible function $\varepsilon$.
  \label{def_ind_so}
\end{definition}

We will also make use of statistical notions of indistinguishability for states
and channels, which are defined as follows.

\begin{definition}[Statistically indistinguishable states]
  Suppose that $S\subseteq\{0,1\}^{\ast}$ is an infinite set of binary strings,
  $r$ is a polynomially bounded function, and $\rho_x$ and $\xi_x$ are states on
  $r(\abs{x})$ qubits for each $x\in S$.
  The collections $\{\rho_x\,:\,x\in S\}$ and
  $\{\xi_x\,:\,x\in S\}$ are \emph{statistically indistinguishable} if
  \begin{equation}
    \frac{1}{2} \norm{\rho_x - \xi_x}_1 \leq \varepsilon(|x|) \,
  \end{equation}
  for all $x\in S$, for a negligible function $\varepsilon$.
  \label{def_sind_states}
\end{definition}

\begin{definition}[Statistically indistinguishable channels]
  Suppose that $S\subseteq\{0,1\}^{\ast}$ is an infinite set of binary strings,
  $q$ and $r$ are polynomially bounded functions, and $\Phi_x$ and
  $\Psi_x$ are channels from $q(\abs{x})$ qubits to $r(\abs{x})$ qubits for
  each $x\in S$.
  The collections $\{\Phi_x\,:\,x\in S\}$ and
  $\{\Psi_x\,:\,x\in S\}$ are \emph{statistically indistinguishable} if
  \begin{equation}
    \frac{1}{2}\norm{\Phi_x - \Psi_x}_{\diamond} \leq \varepsilon(\abs{x})
  \end{equation}
  for all $x\in S$, for a negligible function $\varepsilon$.
\end{definition}

Next we review the definition of quantum computational zero-knowledge
proof systems as defined in~\cite{Wat09}.
Let $(P,V)$ be a quantum or classical interactive proof system for a promise
problem $A$.
An arbitrary (possibly malicious) verifier $V'$ is any quantum computational
process that interacts with $P$ according to the structural specification of
$(P,V)$.
Similar to the classical notion of auxiliary input zero-knowledge, a verifier
$V'$ will take, in addition to the input string $x$, an auxiliary input, and
produce some output.
This is crucial for the composition of zero-knowledge proof systems.
The most general situation allowed by quantum information theory is that both
the auxiliary input and the output are quantum, meaning that the verifier
operates on quantum registers whose initial state is arbitrary and may be
entangled with some external system.
Also similar to the classical case, we will assume that for any given
polynomial-time verifier $V'$ there exist polynomially bounded functions $q$
and $r$ that determine the number of auxiliary input qubits and output qubits
of $V'$.
To say that $V'$ is a polynomial-time verifier means that the entire action of
$V'$ must be described by some polynomial-time generated family of quantum
circuits.

The interaction of a verifier $V'$ with $P$ on input~$x$ induces some channel
from the verifier's $q(\abs{x})$ auxiliary input qubits to $r(\abs{x})$ output
qubits.
Let $\W$ denote the vector space corresponding to the auxiliary input qubits,
let $\Z$ denote the space corresponding to the output qubits, and let
$\Phi_x: \Lin(\W)\rightarrow \Lin(\Z)$ denote the resulting channel induced by
the interaction of $V'$ with $P$ on input~$x$.
A simulator $S$ for a given verifier $V'$ is described by a polynomial-time
generated family of general quantum circuits that agrees with $V'$ on the
functions $q$ and $r$ representing the number of auxiliary input qubits and
output qubits respectively.
Such a simulator does not interact with~$P$, but simply induces a channel that
we will denote by $\Psi_x : \Lin(\W) \rightarrow \Lin(\Z)$ on each input $x$.

\begin{definition}[Quantum computational zero-knowledge]
  An interactive proof system $(P, V)$ for a promise problem $A$ is
  \emph{quantum computational zero-knowledge} if, for every polynomial-time
  generated quantum verifier $V'$, there exists a polynomial-time generated
  quantum simulator $S$ that satisfies the following requirements.
  \begin{mylist}{\parindent}
  \item[1.]
    The verifier $V'$ and simulator $S$ agree on the polynomially bounded
    functions $q$ and $r$ that specify the number of auxiliary input qubits and
    output qubits, respectively.
  \item[2.]
    Let $\Phi_x$ be the channel that results from the interaction between $V'$
    and $P$ on input~$x$, and let $\Psi_x$ be the channel induced by the
    simulator $S$ on input~$x$, both as described above.
    Then the collections $\{\Phi_x: x\in A_{\yes}\}$ and
    $\{\Psi_x: x \in A_{\yes}\}$ are quantum computationally
    indistinguishable.
  \end{mylist}
  \label{def_qczk}
\end{definition}

\subsection{Cryptographic Tools}

Here we introduce a few cryptographic building blocks that are useful
in our proof system.
We emphasize that, as is typical in the classical setting, we formulate all
computational security properties (e.g., concealing in a commitment scheme)
with respect to non-uniform quantum adversaries.
This is inherited from the definition of quantum computational
indistinguishability.
This gives more stringent security requirements and is also crucial in security
proofs.

\subsubsection*{Commitment schemes}

For the sake of simplicity, we describe a commitment scheme that is
non-interactive, i.e., all messages are going from a sender to a
receiver. A similar definition can be derived for interactive schemes.

\begin{definition}[Quantum computationally secure commitment schemes]
  A \emph{quantum computationally secure commitment scheme} for an alphabet
  $\Gamma$ is a collection of polynomial-time computable functions
  $\{f_n\,:\,n\in\natural\}$ taking the form
  \begin{equation}
    f_n:\Gamma\times\{0,1\}^{p(n)} \rightarrow \{0,1\}^{q(n)},
  \end{equation}
  for polynomially bounded functions $p$ and $q$, such that the following
  conditions hold:
  \begin{mylist}{\parindent}
  \item[1.]
    \emph{Unconditionally binding property.}
    For every choice of $n\in\natural$, $a,b\in\Gamma$, and
    $r,s\in\{0,1\}^{p(n)}$, one has that $f_n(a,r) = f_n(b,s)$ implies $a=b$.
  \item[2.]
    \emph{Quantum computationally concealing property.}
    For every $a\in\Gamma$ and $n\in\natural$, define
    \begin{equation}
      \rho_{a,n} = \frac{1}{2^{p(n)}}\sum_{r\in\{0,1\}^{p(n)}}
      \ket{f_n(a,r)}\bra{f_n(a,r)}.
    \end{equation}
    For every choice of $a,b\in\Gamma$ the ensembles
    $\{\rho_{a,n}\,:\,n\in\natural\}$ and $\{\rho_{b,n}\,:\,n\in\natural\}$
    are quantum computationally indistinguishable.
  \end{mylist}
\end{definition}

To commit to a string, one can independently use the commitment
described above bit by bit. Such a commitment scheme can be
constructed based on certain quantum intractability assumptions. As
shown in~\cite{AC02}, it suffices to have quantum-resistant one-way
\emph{permutations}, which are permutations that can be computed
efficiently on a classical computer but are hard to invert for both
classical and quantum polynomial-time algorithms. The same commitment
scheme remains quantum-secure based on a slightly weaker assumption of
quantum-resistant \emph{injective} one-way functions.
Naor showed a commitment scheme with a two-message commit
phase~\cite{Nao91} which will be quantum-secure~\cite{HSS15}, assuming
one uses a pseudo-random generator whose output is \emph{quantum}
computationally indistinguishable from a truly random
string\footnote{It has been stated informally (see
  e.g.,~\cite{Zha12,Son14}) that the pseudo-random generator by
  H{\aa}stad et al.~\cite{HILL99} based on one-way \emph{functions}
  would remain quantum-secure, so long as the one-way functions are
  resistant to any polynomial-time quantum inverting algorithms.}.

Based on such a quantum-secure commitment scheme, we can obtain the
other two essential cryptographic building blocks in our protocol: a
zero-knowledge proof system for \class{NP} and a coin-flipping
protocol, both secure against quantum adversaries.

\subsubsection*{Zero-knowledge proof for \class{NP}} Watrous showed
that~\cite{Wat09} the GMW $3$-Coloring protocol~\cite{GMW91} remains
zero-knowledge in the presence of quantum verifiers, assuming a
statistically binding and quantum computationally hiding commitment
scheme. This means that we have a classical zero-knowledge proof
protocol for any \class{NP} language that is secure against any
polynomial-time quantum verifiers.

\subsubsection*{Coin-flipping}

A coin-flipping protocol is an interactive process that allows two parties
to jointly toss random coins.
It is not necessary for us to consider this notion generally, as we only make
use of one specific coin-flipping protocol, namely Blum's coin-flipping
protocol~\cite{Blu83} in which an honest prover commits to a random
$y\in\{0,1\}$, the honest verifier selects $z\in\{0,1\}$ at random, the prover
reveals~$y$, and the two participants agree that the random bit generated
$r = y\oplus z$.

Damg{\aa}rd and Lunemann~\cite{DL09} proved that Blum's coin-flipping
protocol is quantum-secure, assuming a quantum-secure commitment scheme.
This protocol generates one random coin, and we will need to flip logarithmic
many random bits.
A simple way of achieving this is by sequential repetition, but more
effectively it is possible to extend the analysis of Damg{\aa}rd and Lunemann
and show that parallel repetition of Blum's protocol logarithmic many times
remains quantum-secure.

\subsection{Concatenated Steane codes}
\label{sec:Steane-code}

The last topic to be discussed in this section concerns the existence of
quantum error correcting codes having certain properties that are important to
the functioning of our zero-knowledge proof system for \class{QMA}.
There are multiple choices of codes that satisfy our requirements, but in the
interest of simplicity we will describe just one specific family of codes
in this category.

These codes are based on the \emph{7-qubit Steane code}~\cite{Ste96}, in which
one qubit is encoded into 7 qubits by the following action on standard basis
states:
\begin{equation}
  \ket{0} \mapsto \frac{1}{\sqrt{8}} \sum_{x\in\D_7^0} \ket{x}
  \qquad\text{and}\qquad
  \ket{1} \mapsto \frac{1}{\sqrt{8}} \sum_{x\in\D_7^1} \ket{x},
\end{equation}
where
\begin{equation}
  \begin{aligned}
    \D_7^0 & =
    \{0000000,0001111,0110011,0111100,1010101,1011010,1100110,1101001\},\\
    \D_7^1 & =
    \{0010110,0011001,0100101,0101010,1000011,1001100,1110000,1111111\}.
  \end{aligned}
\end{equation}
It is the case that $\D_7^0$ is a $[7,4]$-Hamming code, while
\begin{equation}
  \D_7 = \D_7^0 \cup \D_7^1
\end{equation}
is the dual code to $\D_7^0$ (i.e., it is the code consisting of all binary
strings of length 7 whose inner product with any codeword in $\D_7^0$ is
even).
This is an example of a \emph{CSS code}~\cite{NC00}, and it is capable of
correcting single-qubit errors.
The standard error-correcting procedure, which we do not actually need in this
paper, is to first reversibly correct errors in the standard basis, with respect
to the code $\D_7$, and then to do the same with respect to the diagonal basis.
The 7-qubit Clifford circuit depicted in Figure~\ref{fig:Steane-encoder}
encodes one qubit into 7 with respect to this code, assuming 6 qubits in
the~$\ket{0}$ state are made available.

\begin{figure}
  \begin{mdframed}[style=figstyle]
  \begin{center}
    \begin{tikzpicture}[scale=0.7,
        control/.style={circle, fill, minimum size = 4pt, inner sep=0mm},
        target/.style={circle, draw, minimum size = 7pt, inner sep=0mm},
        gate/.style={draw, fill = ChannelColor, minimum size = 16pt}]

      \node (In0) at (-6.5,3) {$\ket{\psi}$};
      \node (In1) at (-6.5,2) {$\ket{0}$};
      \node (In2) at (-6.5,1) {$\ket{0}$};
      \node (In3) at (-6.5,0) {$\ket{0}$};
      \node (In4) at (-6.5,-1) {$\ket{0}$};
      \node (In5) at (-6.5,-2) {$\ket{0}$};
      \node (In6) at (-6.5,-3) {$\ket{0}$};

      \node (Out0) at (7,3) {};
      \node (Out1) at (7,2) {};
      \node (Out2) at (7,1) {};
      \node (Out3) at (7,0) {};
      \node (Out4) at (7,-1) {};
      \node (Out5) at (7,-2) {};
      \node (Out6) at (7,-3) {};

      \node[gate] (H4) at (-5,-1) {$H$};
      \node[gate] (H5) at (-5,-2) {$H$};
      \node[gate] (H6) at (-5,-3) {$H$};

      \draw (In0) -- (Out0) {};
      \draw (In1) -- (Out1) {};
      \draw (In2) -- (Out2) {};
      \draw (In3) -- (Out3) {};
      \draw (In4) -- (H4) -- (Out4) {};
      \draw (In5) -- (H5) -- (Out5) {};
      \draw (In6) -- (H6) -- (Out6) {};

      \node[control] (Control01) at (-4,3) {};
      \node[target] (Target11) at (-4,2) {};
      \draw (Control01.center) -- (Target11.south);

      \node[control] (Control02) at (-3,3) {};
      \node[target] (Target22) at (-3,1) {};
      \draw (Control02.center) -- (Target22.south);

      \node[control] (Control63) at (-2,-3) {};
      \node[target] (Target33) at (-2,0) {};
      \draw (Control63.center) -- (Target33.north);

      \node[control] (Control64) at (-1,-3) {};
      \node[target] (Target14) at (-1,2) {};
      \draw (Control64.center) -- (Target14.north);

      \node[control] (Control65) at (0,-3) {};
      \node[target] (Target05) at (0,3) {};
      \draw (Control65.center) -- (Target05.north);

      \node[control] (Control66) at (1,-2) {};
      \node[target] (Target36) at (1,0) {};
      \draw (Control66.center) -- (Target36.north);

      \node[control] (Control67) at (2,-2) {};
      \node[target] (Target17) at (2,1) {};
      \draw (Control67.center) -- (Target17.north);

      \node[control] (Control68) at (3,-2) {};
      \node[target] (Target08) at (3,3) {};
      \draw (Control68.center) -- (Target08.north);

      \node[control] (Control69) at (4,-1) {};
      \node[target] (Target39) at (4,0) {};
      \draw (Control69.center) -- (Target39.north);

      \node[control] (Control6a) at (5,-1) {};
      \node[target] (Target1a) at (5,1) {};
      \draw (Control6a.center) -- (Target1a.north);

      \node[control] (Control6b) at (6,-1) {};
      \node[target] (Target0b) at (6,2) {};
      \draw (Control6b.center) -- (Target0b.north);

    \end{tikzpicture}
  \end{center}
  \caption{A Clifford circuit encoder for the $7$-qubit Steane code.
    Hereafter we will write $U_7$ to refer to the unitary operator on 7 qubits
    described by this circuit.}
  \label{fig:Steane-encoder}
\end{mdframed}
\end{figure}
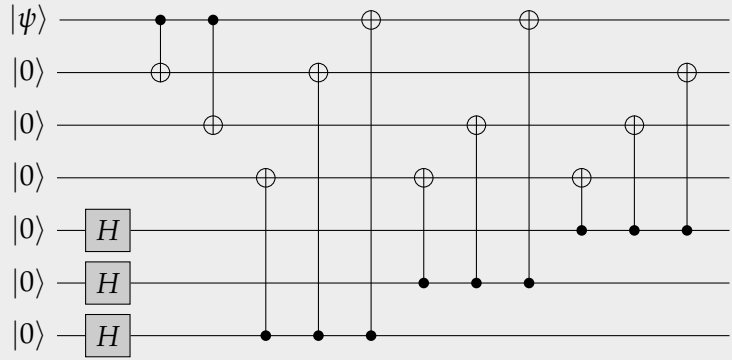

One of the properties of the 7-qubit Steane code that is important from the
viewpoint of this paper is that it admits a \emph{transversal} application of
Clifford operations, in the sense that is explained in
Figure~\ref{fig:Clifford-Steane-transversal}.

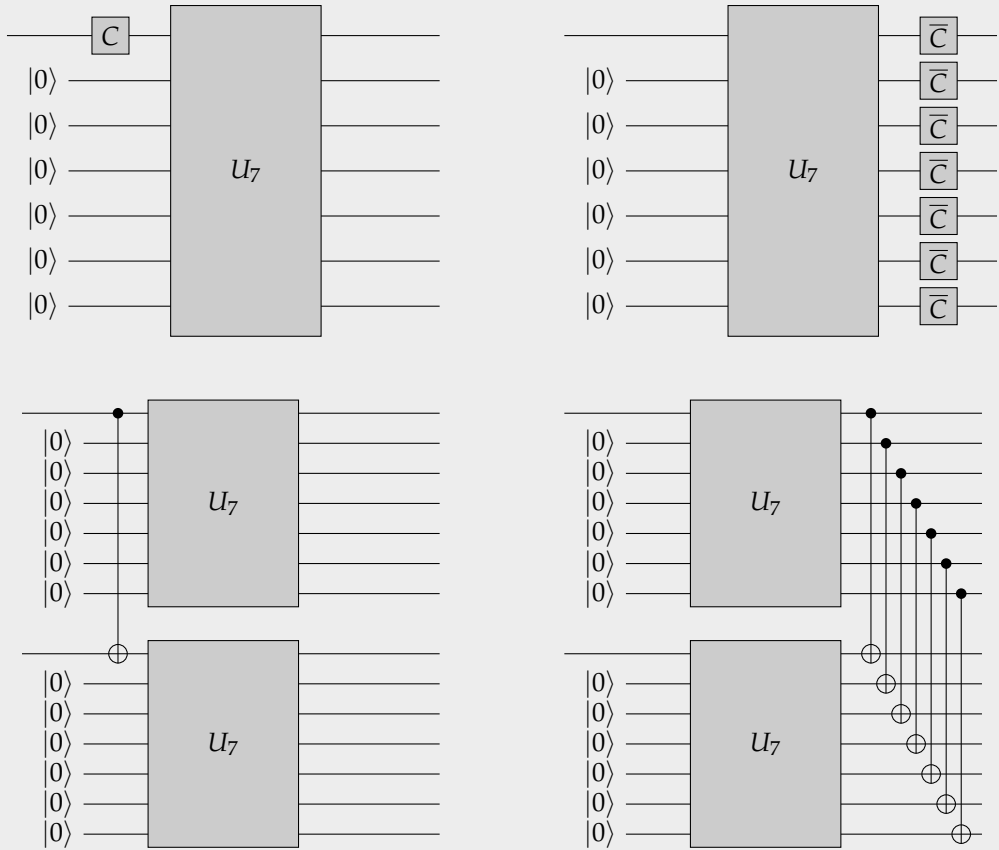
\begin{figure}[t]
  \begin{mdframed}[style=figstyle]
  \begin{center}\small
    \begin{tikzpicture}[scale=0.6,
        control/.style={circle, fill, minimum size = 4pt, inner sep=0mm},
        target/.style={circle, draw, minimum size = 7pt, inner sep=0mm},
        gate/.style={draw, minimum size = 14pt, fill = ChannelColor, inner
          sep=2pt},
        biggate/.style={draw, minimum height = 4.4cm, minimum width = 2cm,
          fill=ChannelColor}]

      \node (In0) at (-7.5,3) {};
      \node (In1) at (-6.5,2) {$\ket{0}$};
      \node (In2) at (-6.5,1) {$\ket{0}$};
      \node (In3) at (-6.5,0) {$\ket{0}$};
      \node (In4) at (-6.5,-1) {$\ket{0}$};
      \node (In5) at (-6.5,-2) {$\ket{0}$};
      \node (In6) at (-6.5,-3) {$\ket{0}$};

      \node (Out0) at (2.5,3) {};
      \node (Out1) at (2.5,2) {};
      \node (Out2) at (2.5,1) {};
      \node (Out3) at (2.5,0) {};
      \node (Out4) at (2.5,-1) {};
      \node (Out5) at (2.5,-2) {};
      \node (Out6) at (2.5,-3) {};

      \draw (In0) -- (Out0);
      \draw (In1) -- (Out1);
      \draw (In2) -- (Out2);
      \draw (In3) -- (Out3);
      \draw (In4) -- (Out4);
      \draw (In5) -- (Out5);
      \draw (In6) -- (Out6);

      \node[gate] (H00) at (-5,3) {$C$};
      \node[biggate] (V) at (-2,0) {$U_7$};

    \end{tikzpicture}
    \hspace*{1cm}
    \begin{tikzpicture}[scale=0.6,
        control/.style={circle, fill, minimum size = 4pt, inner sep=0mm},
        target/.style={circle, draw, minimum size = 7pt, inner sep=0mm},
        gate/.style={draw, minimum size = 14pt, fill = ChannelColor, inner
          sep=2pt},
        biggate/.style={draw, minimum height = 4.4cm, minimum width = 2cm,
          fill=ChannelColor}]

      \node (In0) at (-7.5,3) {};
      \node (In1) at (-6.5,2) {$\ket{0}$};
      \node (In2) at (-6.5,1) {$\ket{0}$};
      \node (In3) at (-6.5,0) {$\ket{0}$};
      \node (In4) at (-6.5,-1) {$\ket{0}$};
      \node (In5) at (-6.5,-2) {$\ket{0}$};
      \node (In6) at (-6.5,-3) {$\ket{0}$};

      \node (Out0) at (2.5,3) {};
      \node (Out1) at (2.5,2) {};
      \node (Out2) at (2.5,1) {};
      \node (Out3) at (2.5,0) {};
      \node (Out4) at (2.5,-1) {};
      \node (Out5) at (2.5,-2) {};
      \node (Out6) at (2.5,-3) {};

      \draw (In0) -- (Out0);
      \draw (In1) -- (Out1);
      \draw (In2) -- (Out2);
      \draw (In3) -- (Out3);
      \draw (In4) -- (Out4);
      \draw (In5) -- (Out5);
      \draw (In6) -- (Out6);

      \node[gate] (H01) at (1,3) {$\overline{C}$};
      \node[gate] (H11) at (1,2) {$\overline{C}$};
      \node[gate] (H21) at (1,1) {$\overline{C}$};
      \node[gate] (H31) at (1,0) {$\overline{C}$};
      \node[gate] (H41) at (1,-1) {$\overline{C}$};
      \node[gate] (H51) at (1,-2) {$\overline{C}$};
      \node[gate] (H61) at (1,-3) {$\overline{C}$};
      \node[biggate] (V) at (-2,0) {$U_7$};
    \end{tikzpicture}\\[8mm]
    \begin{tikzpicture}[scale=0.8,
        control/.style={circle, fill, minimum size = 4pt, inner sep=0mm},
        target/.style={circle, draw, minimum size = 7pt, inner sep=0mm},
        gate/.style={draw, minimum size = 14pt, fill = ChannelColor, inner
          sep=2pt},
        biggate/.style={draw, minimum height = 2.75cm, minimum width = 2cm,
          fill=ChannelColor}]

      \node (In0) at (-5.5,3.5) {};
      \node (In1) at (-4.75,3) {$\ket{0}$};
      \node (In2) at (-4.75,2.5) {$\ket{0}$};
      \node (In3) at (-4.75,2) {$\ket{0}$};
      \node (In4) at (-4.75,1.5) {$\ket{0}$};
      \node (In5) at (-4.75,1) {$\ket{0}$};
      \node (In6) at (-4.75,0.5) {$\ket{0}$};

      \node (Out0) at (1.75,3.5) {};
      \node (Out1) at (1.75,3) {};
      \node (Out2) at (1.75,2.5) {};
      \node (Out3) at (1.75,2) {};
      \node (Out4) at (1.75,1.5) {};
      \node (Out5) at (1.75,1) {};
      \node (Out6) at (1.75,0.5) {};

      \node (In0d) at (-5.5,-0.5) {};
      \node (In1d) at (-4.75,-1) {$\ket{0}$};
      \node (In2d) at (-4.75,-1.5) {$\ket{0}$};
      \node (In3d) at (-4.75,-2) {$\ket{0}$};
      \node (In4d) at (-4.75,-2.5) {$\ket{0}$};
      \node (In5d) at (-4.75,-3) {$\ket{0}$};
      \node (In6d) at (-4.75,-3.5) {$\ket{0}$};

      \node (Out0d) at (1.75,-0.5) {};
      \node (Out1d) at (1.75,-1) {};
      \node (Out2d) at (1.75,-1.5) {};
      \node (Out3d) at (1.75,-2) {};
      \node (Out4d) at (1.75,-2.5) {};
      \node (Out5d) at (1.75,-3) {};
      \node (Out6d) at (1.75,-3.5) {};

      \draw (In0) -- (Out0);
      \draw (In1) -- (Out1);
      \draw (In2) -- (Out2);
      \draw (In3) -- (Out3);
      \draw (In4) -- (Out4);
      \draw (In5) -- (Out5);
      \draw (In6) -- (Out6);

      \node[biggate] (V) at (-2,2) {$U_7$};

      \draw (In0d) -- (Out0d);
      \draw (In1d) -- (Out1d);
      \draw (In2d) -- (Out2d);
      \draw (In3d) -- (Out3d);
      \draw (In4d) -- (Out4d);
      \draw (In5d) -- (Out5d);
      \draw (In6d) -- (Out6d);

      \node[biggate] (Vd) at (-2,-2) {$U_7$};

      \node[control] (C0) at (-3.75,3.5) {};
      \node[target] (T0) at (-3.75,-0.5) {};
      \draw (C0.center) -- (T0.south);

    \end{tikzpicture}
    \hspace*{1cm}
    \begin{tikzpicture}[scale=0.8,
        control/.style={circle, fill, minimum size = 4pt, inner sep=0mm},
        target/.style={circle, draw, minimum size = 7pt, inner sep=0mm},
        gate/.style={draw, minimum size = 14pt, fill = ChannelColor,
          inner sep=2pt},
        biggate/.style={draw, minimum height = 2.75cm, minimum width = 2cm,
          fill=ChannelColor}]

      \node (In0) at (-5.5,3.5) {};
      \node (In1) at (-4.75,3) {$\ket{0}$};
      \node (In2) at (-4.75,2.5) {$\ket{0}$};
      \node (In3) at (-4.75,2) {$\ket{0}$};
      \node (In4) at (-4.75,1.5) {$\ket{0}$};
      \node (In5) at (-4.75,1) {$\ket{0}$};
      \node (In6) at (-4.75,0.5) {$\ket{0}$};

      \node (Out0) at (1.75,3.5) {};
      \node (Out1) at (1.75,3) {};
      \node (Out2) at (1.75,2.5) {};
      \node (Out3) at (1.75,2) {};
      \node (Out4) at (1.75,1.5) {};
      \node (Out5) at (1.75,1) {};
      \node (Out6) at (1.75,0.5) {};

      \node (In0d) at (-5.5,-0.5) {};
      \node (In1d) at (-4.75,-1) {$\ket{0}$};
      \node (In2d) at (-4.75,-1.5) {$\ket{0}$};
      \node (In3d) at (-4.75,-2) {$\ket{0}$};
      \node (In4d) at (-4.75,-2.5) {$\ket{0}$};
      \node (In5d) at (-4.75,-3) {$\ket{0}$};
      \node (In6d) at (-4.75,-3.5) {$\ket{0}$};

      \node (Out0d) at (1.75,-0.5) {};
      \node (Out1d) at (1.75,-1) {};
      \node (Out2d) at (1.75,-1.5) {};
      \node (Out3d) at (1.75,-2) {};
      \node (Out4d) at (1.75,-2.5) {};
      \node (Out5d) at (1.75,-3) {};
      \node (Out6d) at (1.75,-3.5) {};

      \draw (In0) -- (Out0);
      \draw (In1) -- (Out1);
      \draw (In2) -- (Out2);
      \draw (In3) -- (Out3);
      \draw (In4) -- (Out4);
      \draw (In5) -- (Out5);
      \draw (In6) -- (Out6);

      \node[biggate] (V) at (-2,2) {$U_7$};

      \draw (In0d) -- (Out0d);
      \draw (In1d) -- (Out1d);
      \draw (In2d) -- (Out2d);
      \draw (In3d) -- (Out3d);
      \draw (In4d) -- (Out4d);
      \draw (In5d) -- (Out5d);
      \draw (In6d) -- (Out6d);

      \node[biggate] (Vd) at (-2,-2) {$U_7$};

      \node[control] (C01) at (-0.25,3.5) {};
      \node[target] (T01) at (-0.25,-0.5) {};
      \draw (C01.center) -- (T01.south);

      \node[control] (C11) at (0,3) {};
      \node[target] (T11) at (0,-1) {};
      \draw (C11.center) -- (T11.south);

      \node[control] (C21) at (0.25,2.5) {};
      \node[target] (T21) at (0.25,-1.5) {};
      \draw (C21.center) -- (T21.south);

      \node[control] (C31) at (0.5,2) {};
      \node[target] (T31) at (0.5,-2) {};
      \draw (C31.center) -- (T31.south);

      \node[control] (C41) at (0.75,1.5) {};
      \node[target] (T41) at (0.75,-2.5) {};
      \draw (C41.center) -- (T41.south);

      \node[control] (C51) at (1,1) {};
      \node[target] (T51) at (1,-3) {};
      \draw (C51.center) -- (T51.south);

      \node[control] (C61) at (1.25,0.5) {};
      \node[target] (T61) at (1.25,-3.5) {};
      \draw (C61.center) -- (T61.south);

    \end{tikzpicture}
  \end{center}
  \caption{The 7-qubit Steane code allows for the transversal application of
    Clifford operations.
    That is, the circuits on the left are equivalent to the corresponding
    circuits on the right.
    In general, the application of any Clifford operation on $k$ qubits prior
    to being encoded is equivalent to the entry-wise complex conjugate of that
    Clifford operation being applied 7 times to the $7k$ qubits that encode the
    original $k$ qubits.}
  \label{fig:Clifford-Steane-transversal}
\end{mdframed}
\end{figure}

Note that by concatenating the 7-qubit Steane code with itself, one
obtains a code having similar properties to the 7-qubit code, and in
addition having a large minimum distance for the underlying code.
More specifically, suppose that $N = 7^t$ for $t$ being an even
positive integer.  (We take $t$ to be even for convenience, as this
eliminates the entry-wise complex conjugation on Clifford operations
encountered in the discussion of their transversal application.)  By
concatenating the 7-qubit Steane code to itself $t$ times, one obtains
a quantum error-correcting code in which one qubit is encoded into $N$
qubits in the following way:
\begin{equation}
  \ket{0} \mapsto \frac{1}{\sqrt{8^t}} \sum_{x\in\D_N^0} \ket{x}
  \quad\text{and}\quad
  \ket{1} \mapsto \frac{1}{\sqrt{8^t}} \sum_{x\in\D_N^1} \ket{x}
\end{equation}
where $\D_N^0,\D_N^1\subseteq\{0,1\}^N$ are related in a way that generalizes
the case $N = 7$.
In particular, $\D_N^0$ is a binary linear code having $8^t$ elements, and
whose dual code takes the form
\begin{equation}
  \D_N = \D_N^0 \cup \D_N^1
\end{equation}
for $\D_N^1\subseteq\{0,1\}^N$ being a coset of $\D_N^0$.

As a quantum error correcting code, the $t$-fold concatenation of the 7-qubit
Steane code inherits the properties of the 7-qubit Steane code mentioned above.
A Clifford circuit $U_N$ acting on $N$ qubits, $N-1$ of which are to be
initialized in the $\ket{0}$ state, performs the encoding.
This circuit is obtained by creating a tree from multiple copies of the circuit
$U_7$ in the natural way.
The code allows for Clifford operations to be applied transversally.

An added feature of the concatenated versions of the 7-qubit Steane code is
that it corrects more errors than the ordinary 7-qubit code.
In particular, we will make use of the fact that the code $\D_N$, for
$N = 7^t$, has minimum Hamming weight $3^t$ for a nonzero code word.
This allows one to obtain a polynomial-length code for any polynomial
lower-bound on the minimum nonzero Hamming weight of a code word.

\bibliographystyle{acm}
\bibliography{QZK-QMA}

\end{document}